\newcommand{\osd}{optimal deterministic stationary}
\newcommand\M{\ensuremath{\operatorname{Max}}}
\newcommand\m{\ensuremath{\operatorname{Min}}}
\newcommand\supp{\ensuremath{\operatorname{supp}}}
\newcommand\abs[1]{\lvert#1\rvert}
\newcommand\actstat[1]{\ensuremath{\mathbf{#1}}}
\newcommand\states{\actstat{S}} %states
\newcommand\size{\operatorname{size}}
\newcommand\cv[1]{\widehat{#1}} %covering macro
\newcommand\sepst{\ensuremath{\omega}} %separation state
\newcommand\hist{\ensuremath{\mathcal{H}}} %finite histories
\newcommand\ihist{\ensuremath{\hist^\infty}} %infinite histories
\newcommand\allhist{\ensuremath{\hist^{\leq\infty}}} %all histories
\newcommand\rew{\ensuremath{\rho}} %reward mapping
\newcommand\actions{\actstat{A}} %actions
\newcommand{\rewards}{\ensuremath{\actstat{R}}}
\newcommand{\irewards}{\ensuremath{\actstat{R}^\infty}}
\newcommand\covprew{\ensuremath{\cv{\rew}}} %reward probability
\newcommand\covprewv[2]{\ensuremath{\covprew(#1|#2)}} % p(r|s,a,s')
\newcommand\arenacommand[1]{\mathbb{#1}}
\newcommand\arn{\ensuremath{\arenacommand{A}}}
\newcommand\act[1]{\ensuremath{\actf{\trans}{#1}}} %%alphabet mapping
\newcommand\actf[2]{\ensuremath{\actions_{#1}(#2)}} %%alphabet mapping
\newcommand\arena{\ensuremath{(\states,\actions,%
\trans,\tpc,\player,\rewards,\rew)}}
\newcommand\arenaprime{\ensuremath{(\states,\actions,%
\trans',\tpc',\player,\rewards,\rew')}}
\newcommand\covarn{\ensuremath{(\covstates,\actions,\covtrans,
\covtpc,\covplayer,\rewards,\covprew)}}
\newcommand\trans{\ensuremath{\actstat{T}}} %transitions
\newcommand\covtrans{\ensuremath{\cv{\actstat{T}}}} %transitions
\newcommand\covstates{\ensuremath{\cv{\states}}}  %cover arena states
\newcommand\covact[1]{\ensuremath{\covtpc(#1)}}
\newcommand\covarena{\ensuremath{\cv{\arenacommand{A}}}} %cover arena
\newcommand\borel{\ensuremath{\mathcal{B}}}
\newcommand\field{\ensuremath{\mathcal{F}}}
\newcommand\player{\ensuremath{\mathbf{P}}}   %current player mapping
\newcommand\covplayer{\ensuremath{\cv{\player}}} %player mapping for covering arena
\newcommand\plcd[2]{\ensuremath{#1[#2]}}  %strategy + history
\newcommand\plc[3]{\ensuremath{#1[#2](#3)}} %strategy+history+action
\newcommand\strseta{\ensuremath{\Sigma}} 
\newcommand\strsetb{\ensuremath{\mathcal{T}}}
\newcommand{\opt}{\sharp}
\newcommand\sopt{\ensuremath{\sigma^\opt}}
\newcommand\topt{\ensuremath{\tau^\opt}}
\newcommand\covsopt{\ensuremath{\sigma^\opt}}
\newcommand\covtopt{\ensuremath{\tau^\opt}}
\newcommand{\prof}[2]{\ensuremath{(#1\cup #2)}}
\newcommand\covtau{\ensuremath{\cv{\tau}}}
\newcommand\covsigma{\ensuremath{\cv{\sigma}}}
\newcommand{\bh}{\setminus}
\newcommand\tpc{\ensuremath{p}} 
\newcommand{\covtpc}{\ensuremath{\cv{\tpc}}}
\newcommand\transition[3]{\ensuremath{#1(#2,#3)}}
\newcommand{\tp}[2]{\ensuremath{\transition{\tpc}{#1}{#2}}} %\tp{s,a}{s'}
\newcommand{\covtp}[2]{\transition{\covtpc}{#1}{#2}}
\newcommand\pay{\ensuremath{f}}  %payoff mapping
\newcommand\embed{\ensuremath{\pi}} %embeding
\newcommand\pembed{\ensuremath{\widehat{\embed}}} %partial inverse
\newcommand\ipembed{\ensuremath{\pembed^{-1}}} %inverse image of partial inverse
\newcommand\iembed{\ensuremath{\embed^{-1}}} %inverse image
\newcommand\iiembed{\ensuremath{\embed^{-1}}} %inverse image applied
\newcommand\RR{\ensuremath{\mathbb{R}}}
\newcommand{\NN}{\ensuremath{\mathbb{N}}}
\newcommand{\zplus}{\ensuremath{\mathbb{Z}_+}}
\newcommand\p{\ensuremath{\mathbb{P}}}
\newcommand\covp{\ensuremath{\cv{\p}}}
\newcommand\proba[4]{#1_{#2}^{#3,#4}} 
\newcommand\prb[3]{\proba{\p}{#1}{#2}{#3}} 
\newcommand\covprb[3]{\proba{\covp}{#1}{\,#2}{#3}}
\newcommand\power{\ensuremath{\mathcal{P}}} %power set
\newcommand\pref{\ensuremath{\preceq}} %preference relations
\newcommand\mesure{\ensuremath{\mathcal{M}}}
\newcommand\outcome{\ensuremath{\mathcal{O}}}
\DeclareMathOperator{\Outcome}{\mathbb{O}}
\newtheorem{theorem}{Theorem}
\newtheorem{proposition}[theorem]{Proposition}
\newtheorem{definition}[theorem]{Definition}
\newtheorem{lemma}[theorem]{Lemma}
\theoremstyle{definition} \newtheorem{example}{Example}
\title{Optimal deterministic stationary  strategies 
in perfect-information stochastic games with global preferences}
\author{Hugo Gimbert\footnote{CNRS, LaBRI,~\tt{hugo.gimbert@cnrs.fr}.} 
\and Wies{\l}aw Zielonka\footnote{Universit{\'e} 
Paris 7, LIAFA,~\tt{zielonka@liafa.univ-paris-diderot.fr}.}}
\begin{document}
\maketitle

\begin{abstract}
We 
examine the problem of the existence 
of \osd\ strategies
in
two-players antagonistic (zero-sum) perfect information 
stochastic games with finitely 
many states and actions.
We show that 
the existence
of  such strategies follows from the existence of \osd\
strategies for some derived one-player games.
Thus we reduce
the problem from two-player to one-player games (Markov decision
problems),  where usually it is much easier to  tackle.
The reduction is very general, it holds not only 
for all possible payoff mappings but also
in more a general situations where
players' preferences are not expressed by payoffs.
\end{abstract}

\section{Introduction}
\label{sec:intro}

Given a perfect-information zero-sum stochastic game
with a finite set of states and actions, 
the existence of deterministic and stationary optimal strategies
 is a useful property.
The existence of such simple strategies has been well-studied for several
examples of games and Markov decision processes.

For example, since there are finitely many such strategies,
computability of the values of a stochastic game
is often a direct corollary of the existence of deterministic and
stationary optimal strategies.

Of course not in every
game both players have \osd\ strategies,
this depends on the transition rules of the game (the arena)
and on the way players' payoffs  are computed (the payoff function).
Actually, for various payoff functions like the mean-payoff function,
the discounted payoff function and also parity games,
players have deterministic and stationary optimal strategies
whatever is the arena they are playing in.

We provide a result which is very useful for establishing existence
of deterministic and stationary optimal strategies: 
if for some fixed payoff function $\pay$,
players have  \osd\
 strategies 
in every \emph{one-player}
stochastic game 
then this is also the case for zero-sum \emph{two-player}
stochastic games with perfect information.

In fact we prove a more general result. We show that the existence of
\osd\
 strategies for one player games implies the
existence of such strategies for two-player games for each class of
games satisfying certain closure properties.
These closure properties are satisfied by the class of all games.
We prove that this result holds also for some subclasses
of perfect information games,
for example deterministic arenas
or arenas without cycles except self-loops.

The reduction is very general, it holds not only 
for all possible payoff mappings but also
in more a general situations where
players' preferences are not expressed by payoffs
but rather by preference orders on the set of probability measures over plays.

\section{Stochastic Perfect Information Games}
\paragraph{Notation.}
$\NN=\{1,2,3,\ldots\}$, $\zplus=\NN\cup\{0\}$.
For a set $X$, $\abs{X}$ is the cadinality of $X$, $\power(X)$ is the
power set of $X$. 
$\mesure(X,\field(X))$ will stand for the set of probability measures
 on a measurable space $(X,\field(X))$, where
 $\field(X)$ 
is a
$\sigma$-algebra of subsets of $X$.  

If $X$ is finite or countably infinite then
we always assume that $\field(X)=\power(X)$ and to simplify the
notation
 we write
$\mesure(X)$ rather than $\mesure(X,\power(X))$ to denote the set of all
probability measures over $X$. Moreover for $\sigma\in\mesure(X)$,
$\supp(\sigma)=\{x\in X\mid \sigma(x)>0\}$ will denote the support of
measure $\sigma$.

% All spaces considered in this paper are Polish and 
% $\borel(X)$ always stands for the Borel $\sigma$-algebra  of a
% Polish space $X$. Similarly a measurable mapping will always mean
%  Borel measurable.
% If $X$ is finite or countable then we assume that $X$ is equipped with
% the discrete topology and then $\borel(X)$ is equal to the power set
% $\power(X)$. 

% The set of probability measures on 
% $(X,\borel(X))$  is denoted $\mesure(X,\borel(X))$ or just
% $\mesure(X)$.

% \iffalse
% The very special but frequent case is when  
% $X$ is finite or countable. Then we  assume $X$ is equipped
% with the discrete topology and $\borel(X)$ is just the set $\power(X)$ of
% all subsets of $X$. 
% Note that for a finite or countable $X$,  $\mesure(X)$ consists of all
% mappings 
% $\mu : X\to [0,1]$ such that $\sum_{x\in X}\mu(x)=1$.
% \fi

\subsection{Arenas and games}
\label{sec:games:arenas}

We consider  games that two players,
 $\M$ and $\m$, play 
on an arena
\[
\arn=\arena
\]
 consisting of the following ingredients:
\begin{itemize}
\item
 \states\ and \actions\ are finite nonempty sets  of, respectively, states and
 actions,
\item 
$\trans\subseteq\states\times\actions\times\states$ is the
 set of transitions
\item
  $\tpc : \trans \to  (0,1]$ is
 the transition probability function
which
assigns for each transition
$(s,a,s')\in\trans$
  a positive  probability $\tp{s,a}{s'}$ of 
transition from  $s$ to $s'$ if  $a$ is executed at $s$.
We extend \tpc\ to all elements of $\states\times\actions\times\states$
by setting $\tp{s,a}{s'}=0$ if $(s,a,s')\not\in\trans$.
\item
$\player : \states\to\{\m,\M\}$ is a mapping
associating with each state $s\in\states$ the player $\player(s)$ 
 controling $s$. 
\item
finally, \rewards\ a set of rewards and $\rew : \trans \to \rewards$ 
is a reward mapping assigning to each transition $(s,a,s')$ a reward
$\rew(s,a,s')$.
\end{itemize}

For each $s\in\states$, we define
$\act{s}:=\{ a\in\actions \mid \exists s'\in\states, 
(s,a,s')\in\trans \}$
to be the set
of actions  \emph{available} at $s$.
We assume that all $\act{s}$ are nonempty
and,
for each action $a\in\act{s}$,
$\sum_{s'\in\states} \tp{s,a}{s'} = 1$.
%Actions that are not available at $s$
%are not allowed to be executed at $s$.

A infinite game is played by players \M\ and \m\ on \arn, 
at each stage player $\player(s)$ controlling the current state $s$ chooses 
an available action $a\in\act{s}$  which results in a transition
to a  state $s'$ (which can be equal to $s$) with probability 
$\tp{s,a}{s'}$. 
Let us note that the set of rewards can be uncountable, for example 
\rewards\ can be equal to the set \RR\ of real numbers, however since
the set of transition is finite each arena contains only finitely many rewards.

In the case where all transition probabilities are equal either to $0$ or to $1$ the arena is said to be \emph{deterministic}.
In a deterministic arena, given a state $s$ and an action $a\in \act{s}$ available at $s$ there is a unique state $t$ such that $(s,a,t)\in \trans$.

\subsection{Strategies}
\label{sec:strategies}

A \emph{finite history} of length $\abs{h}=n$ in the arena $\arn$
is a finite sequence $h=s_1a_1s_2\ldots s_{n-1}a_{n-1} s_n$
alternating states and actions such that $h$ starts and ends
in a state, contains $n$ states and 
for every $1 \leq i< n$ the triplet $(s_i,a_i,s_{i+1})$
is a transition i.e. $(s_i,a_i,s_{i+1})\in\trans$.

%We extend  $h\in\ihist$, we 
%$\abs{h}=\infty$. 

 The set of finite histories of length $n$ is denoted
 $\hist^n(\arn)$  and
$\hist(\arn):=\cup_{i=1}^\infty \hist^n$  is the set of all  finite histories.
When $\arn$ is clear from the context we simply write $\hist$  instead of
 $\hist(\arn)$.

Let $\hist_{\M}$ be the subset of $\hist$ 
consisting of  finite histories with the last state
controlled by player \M. 
%This set is endowed  with the $\sigma$-algebra
%$\field(\hist_{\M})$ consisting of sets $U\subseteq\hist_{\M}$ such
%that
%$U \in \field(\hist)$.

A strategy of player \M\ is a mapping 
$\sigma : \hist_{\M}\to \mesure(\actions)$
which assigns to each
$h\in\hist_{\M}$ a probability
measure over actions. 
We write $\plc{\sigma}{h}{a}$ for the probability that 
$\sigma$
assigns to action
$a$ for $h\in\hist_{\M}$. We assume that only actions available in
the last state of $h$ can be chosen, i.e.
$\supp(\sigma(h))\subseteq \act{s}$ where $s$ is the last state of $h$.

Strategy $\sigma$ %of player \M\ is 
is said to be \emph{deterministic}   %(or pure)
if, for each $h\in\hist_{\M}$ there is an action $a\in\actions$
such that
$\plc{\sigma}{h}{a}=1$.
We
can identify the deterministic strategies of player \M\ with the  mappings 
$\sigma : \hist_{\M} \to \actions$ such that
for  $h\in\hist_{\M}$,
$\sigma(h)$ is the action selected by $\sigma$ if the current finite history
is $h$.

A strategy $\sigma$ is \emph{stationary} %(or stationary, or  Markovian)
if, for all $h\in\hist_{\M}$,
$\plc{\sigma}{h}{a}=\plc{\sigma}{s}{a}$, where $s$ is the last state
of $h$.

Deterministic stationary strategies of  $\M$ can be seen as 
mappings  from the set $\player^{-1}(\M)$ of states
controlled by \M\ to the set of actions
such that $\plcd{\sigma}{s}\in\act{s}$ for $s\in\player^{-1}(\M)$.

Strategies for player \m\ (deterministic, stationary or general)
are defined \emph{mutatis mutandis}.
% we need to emphasize that we consider histories for a particular arena
% \arn\ then we write 
% $\ihist(\arn)$, $\hist(\arn)$ or $\hist^n(\arn)$.

$\strseta(\arn)$ and $\strsetb(\arn)$
%NO MORE (or $\strseta, \strsetb$ if the arena is clear from the context)
will stand for the sets of strategies of \M\ and \m\ respectively,
and we use $\sigma$ and $\tau$ 
 (with subscripts or superscripts if necessary) to denote 
the elements of $\strseta(\arn)$ and $\strsetb(\arn)$ respectively.
%A strategy profile is an element $(\sigma,\tau)$ of 
%$\strseta\times\strsetb$. 

A strategy profile $(\sigma,\tau)\in\strseta(\arn)\times\strsetb(\arn)$ 
(consisting of  strategies for each player)
defines a mapping $\prof{\sigma}{\tau} : \hist \to
\mesure(\actions)$,
\[
\plc{\prof{\sigma}{\tau}}{h}{a}= \begin{cases}
    \plc{\sigma}{h}{a} & \text{if the last state of $h$ is controlled by
      player \M,}\\
    \plc{\tau}{h}{a} & \text{if the last state of $h$ is controlled by
      player \m}.
\end{cases}
\]

The set of infinite histories $\ihist(\arn)$ consists of infinite sequences
$h=s_1a_1s_2a_2\ldots$ alternating states and actions
such that $(s_i,a_i,s_{i+1})\in \trans$ for every $i \geq 1$.
Again we write $\ihist$ instead of $\ihist(\arn)$ when $\arn$ is clear from the context.

% power set $\sigma$-algebras 
%$\power(\states)$
%and $\power(\actions)$
%and
%$\ihist$ is equipped with  the product
%$\sigma$-algebra 
%$\field(\ihist)=
%\power(\states)\otimes\power(\actions)
%\otimes\power(\rewards)\otimes
%\power(\states)\otimes\power(\actions)
%\otimes\power(\rewards)\otimes\cdots$.

% By
% \[
% \ps_i : \ihist \to \states, \quad
% \pa_i : \ihist \to \actions, \quad
% \pr_i : \ihist \to \rewards,
% \]
% %$\ph_i : \ihist
% %\to \hist^i$, $i\in\NN$, 
% we denote discrete stochastic processes such that,
% for $h=s_1a_1r_1s_2a_2r_2\ldots\in\ihist$,
% \[ \ps_i(h) = s_i,\quad \pa_i(h) = a_i,
% \quad \pr_i(h)=r_i 
% \]
% give the state, the action and the reward at stage $i$.
% % $\mesure_\mu(\ihist,\field(\ihist))$. 

For a finite history $h=s_1a_1s_2\ldots s_n$ by $h^+$ we denote the
cylinder generated by $h$ which consists of all infinite histories
having prefix $h$.
An initial state $s\in\states$ 
and a strategy profile $(\sigma,\tau)$ 
determine a probability measure 
$\prb{\arn,s}{\sigma}{\tau}\in\mesure(\ihist,\field(\ihist))$, where
$\field(\ihist)$ is the  $\sigma$-algebra
generated by the set of all cylinders.
When it is clear from the context,
we remove the arena from this notation 
and simply write 
$\prb{s}{\sigma}{\tau}$.

Given an initial state $s_1$ and strategies $\sigma$, $\tau$ of
players \M\ and \m\ we define the probability
$\prb{s}{\sigma}{\tau}$ of $h^+$:

\begin{multline}
\prb{s}{\sigma}{\tau}(h^+)=I_{s=s_1}\cdot
 \plc{\prof{\sigma}{\tau}}{s_1}{ a_1}\cdot
 \tp{s_1,a_1}{s_2}\cdot
\\
\plc{\prof{\sigma}{\tau}}{s_1a_1s_2}{ a_2}\cdot
\tp{s_2,a_2}{s_3}\cdot
\cdots \\
\plc{\prof{\sigma}{\tau}}{s_1a_1s_2\ldots s_{n-1}}{ a_{n-1}}
\cdot \tp{s_{n-1},a_{n-1}}{s_n}
\label{def:prob}
\end{multline}
where $I_{s=s_1}$ is the indicator function equal to $1$ is $s=s_1$ and
$0$ otherwise.
%Since the set of all cylinders form a $\pi$-system 
%generating $\field(\ihist)$
%(the intersection of
%cylinders is a cylinder) $\prb{s}{\sigma}{\tau}$ can be extended in a unique
%way to a probability measure on 
%$\field(\ihist)$.

By
 the Ionescu Tulcea theorem \cite{neveau}
there exists a unique probability measure 
$\prb{s}{\sigma}{\tau}\in\mesure(\ihist,\field(\ihist))$
satisfying \eqref{def:prob}.
Moreover the support of this probability measure
is the set 
$\ihist(\arn,s,\sigma,\tau)$
of infinite histories whose every finite prefix
has positive probability i.e. 
\[
\ihist(\arn,s,\sigma,\tau)=
\left\{
s_1a_1s_2\cdots \in \ihist(\arn) \mid 
\forall n, \plc{(\sigma \cup \tau)}{s_1a_1s_2\cdots s_n}{a_n}>0\right\}
\enspace.
\]

%\label{th:extens}
%\end{theorem}
\iffalse
\begin{proof}
The product $\sigma$-algebra 
$\field(\rewards)\otimes\power(\states)$ consists of the sets of the form 
$\bigcup_{s'\in\states} U_{s'}\times\{s'\}$,
where $U_{s'}\in\field(\rewards)$.
Thus if for  $s\in\states$ and $a\in\actions$ we set
\[
\overline{\tpc}(s,a,\bigcup_{s'\in\states} U_{s'}\times\{s'\})= 
\sum_{s'\in\states}\tp{s,a}{U_{s'},s'}
\]
then $\overline{\tpc}(s,a,\cdot)$ defines a probability measure on 
$(\rewards\times\states,\field(\rewards)\otimes\power(\states))$.
In  other words $\overline{\tpc}$ defined above is a probability kernel from
$(\states\times\actions,\power(\states)\otimes\power(\actions))$
into $(\rewards\times\states,\field(\rewards)\otimes\power(\states))$.

Similarly, if for each $n$ we define $\prof{\sigma_n}{\tau_n}$ to be
a mapping from $\hist_n\times\power(\actions)$ to $[0,1]$ such that
$\plc{\prof{\sigma_n}{\tau_n}}{h}{X}=
\sum_{a\in X} \plc{\prof{\sigma_n}{\tau_n}}{h}{a}$
then, by \eqref{eq:somme} and \eqref{eq:deuxieme}, 
$\prof{\sigma_n}{\tau_n}$ is a probability kernel from
$(\hist^n,\field(\hist^n))$ to $(\actions,\power(\actions))$.

Thus  the result follows from the well-known Ionescu Tulcea theorem. 
\end{proof}
\fi

\subsection{Players  preferences}
\label{sec:values}

We extend the reward mapping $\rew$
to  finite and infinite histories: 
for $h=s_1a_1s_2a_2s_3a_3s_4\ldots\in\allhist$, we set
$\rew(h)=\rew(s_1,a_1,s_2)\rew(s_2,a_2,s_3)\rew(s_3,a_3,s_4)\ldots$.

%This mapping is defined only for infinite histories
%belonging to the set
%\[
%\ihist(\arn)=\{ 
%s_1a_1s_2a_2s_3a_3s_4\ldots\in\ihist \mid
%\text{for all $i\geq 1$, $(s_i,a_i,s_{i+1})\in\trans$}\}
%\]
%of infinite histories compatible with arena \arn.
%Note however that this is
%sufficient since the support $\ihist(\arn,s,\sigma,\tau)$
%of the probability measure $\prb{s}{\sigma}{\tau}$
%is included in 
%$\ihist(\arn)$ for all initial states and all strategies of players
%\M\ and \m.

We assume that the set $\irewards$ of infinite reward sequences is
endowed with the product $\sigma$-algebra
$\field(\irewards)
:=\otimes_{i=1}^\infty\power(\rewards)$
%\otimes\power(\rewards)\otimes\power(\rewards)\otimes\ldots$.
Then the mapping $\rew$ defined above is a measurable mapping from
$(\ihist(\arn),\field(\ihist(\arn))$ to
$(\irewards,\field(\irewards))$.

%and then $\prm$ is a measurable mapping from 
%$(\ihist,\field(\ihist))$ to $(\irewards,\field(\irewards))$.
This implies that, for each probability measure 
$\p\in\mesure(\ihist,\field(\ihist))$ with support
$\supp(\p)$ included in $\ihist(\arn)$, the mapping
$\rew$ induces a probability measure 
$\rew\p\in\mesure(\irewards,\field(\irewards))$
such that,
for $U\in\field(\irewards)$,
\[
\rew\p(U):=\p(\rew^{-1}(U))\enspace.
\]

We assume that  players are interested only in  infinite
sequence of
rewards obtained during the play. This leads to the following definition.
\begin{definition}[Outcomes and preference relations]
\label{def:outcome}
Fix an arena $\arn$.
For a given strategy profile $(\sigma,\tau)$ and an initial state
$s$ the \emph{outcome} of the game
is
the probability measure
\[
\Outcome(\arn,s,\sigma,\tau)=
\rew\prb{\arn,s}{\sigma}{\tau}
\in\mesure(\irewards,\field(\irewards))\enspace
\]
%is the outcome of
%the game starting at $s$ when the players use strategies $\sigma,\tau$.
Given two set of strategies $\strseta \subseteq \strseta(\arn)$
and $\strsetb \subseteq \strsetb(\arn)$
We set 
\[
%\begin{align*}
%&
\outcome_s(\arn, \strseta, \strsetb):=\{\Outcome(\arn,s,\sigma,\tau) \mid
\text{$(\sigma,\tau)\in \strseta\times\strsetb$}
\} 
\]
to be the set of outcomes in $\arn$ starting at $s$ and using strategies
from $\strseta$ and $\strsetb$ and
\[
\outcome(\arn,\strseta,\strsetb)
=\bigcup_{s\in\states} \outcome_s(\arn,\strseta,\strsetb)\enspace
\]
the set of all possible outcomes using strategies
from $\strseta$ and $\strsetb$.
%$\outcome(\arn)=\bigcup_{s\in\states} \outcome_s(\arn)$.
%

A \emph{preference relation in $\arn$ with strategies in $\strseta$ and $\strsetb$} is a reflexive and transitive 
binary relation \pref\
over $\outcome(\arn,\strseta,\strsetb)$
\end{definition}

%When $\strseta$ and $\strsetb$ are obvious from the context,
%in particular when there are no restrictions on the strategies available to the players then
%we simplify the notation and write $\outcome(\arn)$ instead of
%$\outcome(\arn,\strseta,\strsetb)$.
%

%We admit the possibility that
%$\p_1\pref \p_2$ and $\p_2\pref \p_1$ for two different outcomes
%$\p_1$ and $\p_2$, which means
%that %the player with the preference relation   \pref\ 
%\pref\  is indifferent between $\p_1$ and $\p_2$ 
%(or that $\p_1$ and $\p_2$ are outcome equivalent
%for $\pref$) 
%and then we write
%$\p_1\approx \p_2$ to denote this fact.
%We  write $\p_1\spref \p_2$ to denote that 
%$\p_1\pref\p_2$ but not $\p_2\pref\p_1$ 
%and in this case we say that \pref\ strictly prefers
%$\p_2$ to $\p_1$. 

A preference relation \pref\ is \emph{total} if
for 
all outcomes $\p_1,\p_2\in\outcome(\arn,\strseta,\strsetb)$, 
either
$\p_1 \pref
\p_2$ or
$\p_2 \pref
\p_1$. Usually naturally arising preference relations  are total
but this assumption is not necessary to formulate and prove our main
result.
%therefore we allow preference relations
%that are not total.

\subsection{Games and optimal strategies}

A \emph{game} is a tuple $\Gamma=(\arn,\strseta,\strsetb,\pref)$ composed of an arena,
some set of strategies for the players in $\arn$
and a preference relation \pref\ in $\arn$ with strategies in $\strseta$ and $\strsetb$.
The aim of \M\ is
to maximize the obtained outcome with respect to \pref.
We consider  only zero-sum games where 
the preference relation of
player \m\  is the inverse of \pref.

A strategy 
$\sigma\in\strseta$ is a \emph{best response} to a strategy
$\tau\in\strsetb$ in $\Gamma$  if for each state $s$ and  
each strategy $\sigma'\in\strseta$ of \M,
\[
\Outcome(\arn,s,\sigma',\tau)
\pref
\Outcome(\arn,s,\sigma,\tau)\enspace.
%\rew\prb{s}{\sigma}{\tau}
\] 
Symmetrically, a best response 
of player \m\ 
to a
strategy $\sigma$ of \M\ is a strategy $\tau$ such that
 for all strategies $\tau'\in\strsetb$ of \m,
\[
\Outcome(\arn,s,\sigma,\tau)
\pref
\Outcome(\arn,s,\sigma,\tau')\enspace.
%\rew\prb{s}{\sigma}{\tau}
\] 
%$\rew\prb{s}{\sigma}{\tau}\pref
%\rew\prb{s}{\sigma}{\tau'}$. 

A pair of strategies $\sopt\in\strseta,\topt\in\strsetb$ is \emph{optimal} 
if \sopt\ is a best response to
\topt\ and \topt\ a best response to \sopt,
i.e.
%In other words  $\sopt\in\strseta(\arn)$ and 
%$\topt\in\strsetb(\arn)$ are optimal
% of players \M\ and \m\ are said to be \emph{optimal}
%for  $\Gamma=(\arn,\pref)$
if, for each state $s$ and all strategies $\sigma\in\strseta$, 
$\tau\in\strsetb$,
\begin{equation}
\Outcome(\arn,s,\sigma,\topt)
\pref
\Outcome(\arn,s,\sopt,\topt)
\pref
\Outcome(\arn,s,\sopt,\tau)\enspace.
%\rew\prb{s}{\sigma}{\topt}\pref
%\rew\prb{s}{\sopt}{\topt}\pref
%\rew\prb{s}{\sopt}{\tau} .
\label{eq:optimalityi}
\end{equation}

It is an elementary exercise to check that if 
$(\sopt_1,\topt_1)$ and $(\sopt_2,\topt_2)$ are pairs of optimal strategies
then $(\sopt_1,\topt_2)$ and $(\sopt_2,\topt_1)$ also are.
As a consequence we say that a single strategy $\sopt\in\strseta$ itself is optimal whenever it belongs
to some pair of optimal strategies $(\sopt,\topt)$, and similarly for $\topt\in\strsetb$.

\iffalse
We could have considered an apparently stronger definition of optimality of
strategies $\sopt$ and $\topt$ by requiring that 
\begin{equation}
\prb{\mu}{\sigma}{\topt}\pref
\prb{\mu}{\sopt}{\topt}\pref
\prb{\mu}{\sopt}{\tau} 
\label{eq:fulloptimality}
\end{equation}
holds for each initial distribution $\mu\in\mesure(\states)$.

Note however that 
$\prb{\mu}{\sigma}{\tau}=\sum_{s\in\states}\mu(s)\cdot\prb{s}{\sigma}{\tau}$
thus if we assume that \pref\ satisfies the following monotonicity
condition then \eqref{eq:optimalityi} will imply \eqref{eq:fulloptimality}.

We say that \pref\ is \emph{monotone} if for each pair of strategy
profiles $(\sigma,\tau)$ and $(\sigma',\tau')$ such that
$\prb{s}{\sigma}{\tau}\pref \prb{s}{\sigma'}{\tau'}$ for all $s\in\states$
we have also $\prb{\mu}{\sigma}{\tau}\pref \prb{\mu}{\sigma'}{\tau'}$
for all $\mu\in\mesure(\states)$.
The monotonicity condition is satisfied 
for all naturally defined preference
relations.
\fi

\subsection{Deterministic games}

For some applications, it is natural to require both arenas and strategies
to be deterministic.
In this case, it is enough to express the preferences 
of the players between infinite sequences of rewards because the probability measures defined
by strategy profiles are Dirac measures over $\irewards$.
For this purpose we define \emph{deterministic preference relations} as follows.

\begin{definition}
\label{def:detoutcome}
Fix a deterministic arena $\arn$.
For a given profile $(\sigma,\tau)$ of deterministic strategies
% in $\arn$
and an initial state
$s$ there is a unique infinite history $h(s,\sigma,\tau)=s_1a_1s_2a_2\ldots \in \ihist$ such that
$s=s_1$ and for every $n$,
\begin{align*}
& \prof{\sigma}{\tau}(s_1a_1\ldots a_{n-1}s_n) = a_n\enspace\text{ and}\\
& (s_n,a_n,s_{n+1})\in\trans\enspace,
\end{align*}
and $\rew(h(s,\sigma,\tau))\in\irewards$ is called 
the deterministic outcome of the game.
%
%We set 
%\[ 
%\outcome_s(\arn):=\rew(h(s,\sigma,\tau)) \mid
%\text{$(\sigma,\tau)\in \strseta\times\strsetb$}
%\} 
%\]
%to be the set of outcomes of the games starting at $s$ and 
%$\outcome(\arn)=\bigcup_{s\in\states} \outcome_s(\arn)$.
%
A \emph{deterministic preference relation in $\arn$} is a reflexive and transitive 
binary relation \pref\
over the set 
of deterministic outcomes in $\arn$.
\end{definition}

A \emph{deterministic game} is a tuple $\Gamma=(\arn,\strseta,\strsetb,\pref)$ composed of a deterministic arena,
the sets  $\strseta$ and $\strsetb$ of deterministic strategies in $\arn$
and a deterministic preference relation \pref\ of player \M.
In a deterministic game
strategies $\sopt,\topt$ are optimal
%if \sopt\ is a best response to
%\topt\ and \topt\ a best response to \sopt,
%i.e.
%In other words  $\sopt\in\strseta(\arn)$ and 
%$\topt\in\strsetb(\arn)$ are optimal
% of players \M\ and \m\ are said to be \emph{optimal}
%for  $\Gamma=(\arn,\pref)$
if for each state $s$ and all deterministic strategies $\sigma\in\strseta$, 
$\tau\in\strsetb$,
\begin{equation}
\rew(h(s,\sigma,\topt))
\pref
\rew(h(s,\sopt,\topt))
\pref
\rew(h(s,\sopt,\tau)).
\label{eq:detoptimalityi}
\end{equation}

\section{Examples}
\label{sec:payoff}

Specifying  players' preferences  by means of  preference relations
over measures
allows us 
to cover a wide range of optimality criteria. We illustrate this
flexibility with four examples.

%In Section~\ref{sec:values}
%We have defined players' preferences in an abstract way by means of a
%preference relation over outcomes.
In most application the preferences  are rather defined  by means of a payoff
mapping.

A payoff mapping is a measurable mapping
$f$ from the set $(\irewards,\borel(\irewards))$
of infinite reward sequences to the set $(\RR,\borel(\RR))$
of real numbers equipped with the $\sigma$-algebra of Borel sets.
%$f : \irewards \to \RR$.
%  bounded either from above or from below and
%$\field(\ihist)/\field(\RR)$ 
%measurable.

For each outcome $\rew \prb{s}{\sigma}{\tau}$ 
we write $\rew \prb{s}{\sigma}{\tau}(f)$ for the
expectation of $f$, 
\[
\rew \prb{s}{\sigma}{\tau}(f):=\int_{\irewards} f d\p\]
 where
$\p:=\rew\prb{s}{\sigma}{\tau}$.
We assume that $f$ is integrable for all outcome measures
$\rew\prb{s}{\sigma}{\tau}$.

We say that a preference relation \pref\ is induced by a payoff
mapping $f$ if
for any two outcomes $\p_1$ and $\p_2$, $\p_1\pref \p_2$ iff
$\p_1(f)\leq\p_2(f)$.

Mean-payoff games and parity games are two well-known examples of
games with preferences  induced by payoff mappings.

\begin{example}[Mean-payoff games]
\label{ex:mean}
A mean-payoff game is a game played on  arenas 
equipped with  a reward mapping 
$\rew : \trans \to \RR$
and the payoff of an infinite reward sequence
$r_1r_2r_3\ldots$  is given by 
$\limsup \frac{\sum_{i=1}^n r_i}{n}$.

\end{example}
 
\begin{example}[Parity games]
\label{ex:parity}
The class of games with many applications in computer science and
logic is the class of parity games \cite{dagstuhl}.
These games are played on arenas endowed with a priority mapping
$\beta : \states \to \zplus$ and the payoff for an infinite history 
is either $1$ or $0$  depending on whether 
$\limsup_i \beta(s_i)$, the maximal priority visited infinitely
often,
 is odd or even, where $s_i$ is the state visited at stage $i$.
Again the aim of players \M\ and \m\ is, respectively, 
to maximize/minimize the probability
\[
\prb{s}{\sigma}{\tau}( \limsup_n \beta(s_n) \text{ is even})\enspace.
\]
\end{example}

The next example is a variant of parity games
which is positional in deterministic arena
but in general not in stochastic arenas.
\begin{example}[Simple parity games]
\label{ex:simpleparity}
A simple parity game is played in a parity game arena,
the aim of players \M\ and \m\ is, respectively, 
to maximize/minimize the probability
\[
\prb{s}{\sigma}{\tau}( \sup_n \beta(s_n) \text{ is even})\enspace.
\]
\end{example}

We continue with two  examples where the preference relation is not
induced by a payoff over infinite reward sequences.
The first is a well-known variant of mean-payoff games
of Example~\ref{ex:mean}.
\begin{example}[Mean-payoff games]
\label{ex:mp}
The arena is equipped with a real valued reward mapping $\rew :
\trans \to\RR$
exactly like in Example~\ref{ex:mean}.
Let $f_n = \frac{\sum_{i=1}^n \rew(s_i,a_i,s_{i+1})}{n}$ be the
mean-payoff over first $n$ periods.
For outcomes $\p_1$ and $\p_2$  we set $\p_1\pref \p_2$ if
$\limsup_n \p_1( f_n )\leq \limsup_n \p_2( f_n)$.
\end{example}

The next example is variant of the overtaking optimality criterion:
\begin{example}[Overtaking]
\label{ex:over}
Let $\rew : \trans \to \RR$ 
be a real valued reward mapping. 
For outcomes $\p_1$ and $\p_2$ we set
$\p_1\pref\p_2$ if there exists $n$ such that for all $k\geq n$,
$\p_1( \sum_{i=1}^k \rew(s_i,a_i,s_{i+1}) ) \leq
\p_2( \sum_{i=1}^k \rew(s_i,a_i,s_{i+1}) )$. This preference
relation is not total.
\end{example}

In the first three examples we associated with each outcome a real
number which allows us to order the outcomes. 
These real numbers can be used also to quantify how  much one outcome
is better than another one for a given player
but the question ``how much better'' is irrelevant when we are
interested in optimal strategies. Moreover, as
Example~\ref{ex:over}
shows,
for  some preference relations it is
difficult to  define a corresponding payoff mapping.

\section{From one-player to two-player games: the case of deterministic games and perfect-information stochastic games }
\label{sec:framework}

Our main result relies on three notions: one player games, 
subarenas and coverings. Before stating our theorem in its full generality, we provide in this section a weaker form
of Theorem~\ref{theo:main} which does not make use of the notions of subarenas and coverings and is enough to cover most applications.

\subsection{One-player arenas}
\label{sec:oneplayer}

We say that arena $\arn=\arena$
is  a \emph{one-player arena} controlled by
$\M$ 
if each state $s$ controlled by player \m\ has only one available action. 
But for states  with one available action 
 it is essentially irrelevant which player controls them
and we can as well assume that all states of such an arena \arn\ 
are controlled by \M.
Therefore  games on one-player arenas controlled by
player \M\ are nothing else but Markov decision processes where the unique
player \M\ wants to maximize (relative to \pref) 
the resulting outcome.

One-player arenas controlled by player \m\ are defined in a symmetric
way and they can be identified with Markov decision processes where
the aim of the
unique player \m\ is to minimize the outcome relative to \pref.

A one-player game %$(\arn,\pref)$ 
is a game on a one-player arena \arn\ controlled either by \M\ or by \m.

\subsection{Specializations of the one-to-two theorem}
\label{sec:weak}

Our main result reduces the problem of the existence of \osd\
strategies in two-player games to the same problem for one-player games. 

For the reader interested only in using our theorem for a particular class of games,
 it may be sufficient to make use of one of the two following weak forms of our result, 
 which adresses specifically the cases of deterministic games
 and of perfect-information stochastic games.
 
The general statement and its proof may be found in the next section.

We start with a specialization of our theorem to the
class of deterministic games.

\begin{theorem}[One-to-two theorem for deterministic games]
\label{theo:deterministic}
Let $\rewards$ be a set of rewards and $\pref$ a reflexive and  transitive relation on $\irewards$. A deterministic game $G=(\arn_G,\pref_G)$ with rewards in $\rewards$ is said to be \emph{compatible} with $\pref$ if $\pref_G$ is the restriction of $\pref$ to the set of deterministic outcomes of $G$.

 Assume that \osd\ strategies exist in every deterministic one-player games compatible with $\pref$.
 Then \osd\ strategies exist in every deterministic two-player games compatible with $\pref$.
 \end{theorem}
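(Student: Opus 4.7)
The plan is to prove the theorem by induction on $\sum_{s\in\states}|\act{s}|$, the total number of state-action pairs in the arena. The base case is when every state has a single available action: then only one strategy pair exists, and it is optimal by reflexivity of $\pref$.

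For the inductive step, pick a state $s_0$ with $|\act{s_0}|\geq 2$; by symmetry, assume $s_0$ is controlled by $\M$. Fix an action $a_0\in\act{s_0}$ and form two subarenas $\arn_-$ and $\arn_+$ of $\arn$, obtained by restricting $\act{s_0}$ to $\act{s_0}\setminus\{a_0\}$ and to $\{a_0\}$ respectively, leaving every other state and action unchanged. Both subarenas have strictly fewer state-action pairs, and since their deterministic outcomes form a subset of those of $\arn$, the restriction of $\pref$ remains compatible with $\pref$ on $\irewards$, so the induction hypothesis applies, producing osd optimal pairs $(\sigma_-^\opt,\tau_-^\opt)$ in $\arn_-$ and $(\sigma_+^\opt,\tau_+^\opt)$ in $\arn_+$. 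The two Min-strategies are defined on Min-controlled states, which are common to $\arn_-$, $\arn_+$, and $\arn$, and therefore lift canonically to osd strategies of $\arn$.

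The combining step is where the one-player hypothesis is invoked. Fix $\tau_-^\opt$ inside $\arn$: the resulting one-player $\M$-arena is deterministic and still compatible with $\pref$, so by hypothesis $\M$ admits an osd best response $\hat\sigma$. Symmetrically, fixing $\tau_+^\opt$ yields an osd best response $\hat\sigma'$. The claim is that one of the pairs $(\hat\sigma,\tau_-^\opt)$ or $(\hat\sigma',\tau_+^\opt)$ is optimal in $\arn$; which one depends on the action played by $\hat\sigma$ at $s_0$ and, when both branches tie, on comparing the rewards $V_-(s_0)=\rew(h(s_0,\sigma_-^\opt,\tau_-^\opt))$ and $V_+(s_0)=\rew(h(s_0,\sigma_+^\opt,\tau_+^\opt))$. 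To close each case we further invoke the one-player $\m$ hypothesis to exhibit, against the candidate $\M$-strategy, an osd Min best response, and then verify that it coincides with the fixed $\tau$ up to outcome.

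The main obstacle is showing that the glued osd pair is a best-response profile against \emph{all} strategies in $\arn$, not merely against the specific osd strategies used to construct it. The danger is that $\M$ (or $\m$) could deviate via a history-dependent strategy to exploit the availability of both branches at $s_0$. Ruling this out requires chaining preference inequalities between $V_-$, $V_+$, and the best-response outcomes provided by the two applications of the one-player hypothesis, all within a single transitive order on $\irewards$. This is where the global compatibility of $\pref$ enters crucially: it forces all comparisons across $\arn$, $\arn_-$, and $\arn_+$ to be made in the same relation, which is what permits the gluing argument to close.
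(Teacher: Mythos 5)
Your scaffolding---induction on the number of available actions, isolating a state $s_0$ with at least two actions, solving smaller games by induction, and invoking the one-player hypothesis to absorb arbitrary deviations of the player controlling $s_0$---matches the paper's. But the combining step, which is the only non-routine part, is asserted rather than carried out, and with the ingredients you have it does not close. Fix $\tau_-^\opt$, certified optimal only in $\arn_-$, and let $\hat\sigma$ be Max's \osd\ best response to it in the one-player game on the full arena. Nothing prevents $\hat\sigma(s_0)=a_0$; then the play under $(\hat\sigma,\tau_-^\opt)$ lives entirely in $\arn_+$, and its outcome is an outcome of $\arn_+$ against the restriction of $\tau_-^\opt$, a Min strategy that was never certified optimal there. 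Since $\pref$ is on whole reward sequences and need not be prefix-independent, Min's correct stationary choice at a state reached after the $a_0$-transition can differ from the choice $\tau_-^\opt$ makes, so $\Outcome(\arn,s_0,\hat\sigma,\tau_-^\opt)$ need not be dominated by $V_+$ and Min can profitably deviate: the pair is not optimal. Symmetrically $\hat\sigma'$ may escape into $\arn_-$ and exploit $\tau_+^\opt$ there. Both of your candidate pairs can fail simultaneously, and the ``chaining of preference inequalities'' you appeal to has no anchor, because transitivity only lets you compare outcomes along chains in which each intermediate profile has a certified-optimal component in the subarena where the play actually takes place.

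The paper fills exactly this hole with the \emph{split} construction, which your proposal omits. It forms $\sparn$, whose states remember Max's last action at $\sepst=s_0$, solves by induction one subgame $\spG_x$ per action $x\in\act{\sepst}$ (not just two), and glues the pieces $\covtopt_x$ into an auxiliary Min strategy $\covtopt(s_x)=\covtopt_x(s_x)$ that is stationary on $\sparn$ but \emph{not} stationary on $\arn$: it responds optimally to whichever branch Max enters at $s_0$. Only against such an opponent does Max's one-player \osd\ best response $\zeta^\opt$, playing some $e$ at $\sepst$, have an outcome comparable to $\Outcome(\covsopt_e,\covtopt_e)$, which is what closes the chain and produces an \osd\ optimal strategy \emph{for Max alone}. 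Min's \osd\ strategy comes from the symmetric argument at a Min-controlled state, and the two halves are combined via the exchange property of optimal pairs in zero-sum games---rather than by exhibiting a single glued optimal pair as you attempt. To repair your argument you would need a Min opponent that is simultaneously a certified best response inside $\arn_+$ after $a_0$ and inside $\arn_-$ otherwise; no single stationary strategy on $\arn$ provides this, which is precisely why the split (i.e., a finite-memory opponent) is introduced.
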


Remark that we do not require the transitive relation $\pref$ to be total, in other words $\pref$ is simply assumed to be a partial preorder. This degree of generality is natural because in this paper all arenas are assumed to be finite, hence any infinite history arising in a game goes through finitely many different transitions and generates a sequence of rewards that takes finitely many different values in $\rewards$. Thus it is useless  to define $u \pref v$ when either $u$ or $v$ takes infinitely many different letters from $\rewards$.

When applied to the class of all perfect-information two-player games, our theorem specializes as follows. 
%Again, we assume that all the states and actions of all the arenas belongs to the same infinite set of identifiers.

\begin{theorem}[One-to-two theorem for games with perfect-information]
\label{theo:weak}
Let $\rewards$ be a set of rewards and $\pref$ a reflexive and transitive relation on the set $\mesure(\ihist,\field(\ihist))$ of probability measures on $\ihist$ equipped with the  $\sigma$-algebra
$\field(\ihist)$ generated by the set of all cylinders.

A game $G=(\arn_G,\pref_G)$ with rewards in $\rewards$ is said to be \emph{compatible} with $\pref$ if $\pref_G$ is the restriction of $\pref$ to the set of outcomes of $G$.

 Assume that \osd\ strategies exist in every deterministic one-player games compatible with $\pref$.
 Then \osd\ strategies exist in every deterministic two-player games compatible with $\pref$.
%Let $\rewards$ be a set of rewards and $\pref$ a transitive relation on the set $\mesure(\ihist,\field(\ihist))$ of probability measures on $\ihist$ equipped with the  $\sigma$-algebra
%$\field(\ihist)$ generated by the set of all cylinders.
%Assume that there are \osd\ strategies in every deterministic one-player game equipped with the preference order $\pref$. Then there are \osd\ strategies in every deterministic two-player game equipped with the preference order $\pref$.
\end{theorem}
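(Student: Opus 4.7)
My plan is to prove Theorem~\ref{theo:weak} by induction on the total branching quantity $N(\arn) := \sum_{s\in\states}(|\act{s}|-1)$, which measures how far $\arn$ is from being a one-player arena. In the base case $N(\arn)=0$ every state has a unique available action, so $\arn$ admits a single strategy pair, which is vacuously optimal, deterministic, and stationary, and it is compatible with any restriction of $\pref$.

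For the inductive step fix $\arn$ with $N(\arn)\geq 1$ and pick a state $s_0$ with $|\act{s_0}|\geq 2$. By symmetry between the two players I may assume $\player(s_0)=\M$. Split $\act{s_0}=A_1\sqcup A_2$ into two nonempty disjoint subsets and form the sub-arenas $\arn_1,\arn_2$ obtained from $\arn$ by restricting the actions available at $s_0$ to $A_1$ and $A_2$ respectively. Since $N(\arn_i)<N(\arn)$ and each sub-arena game is compatible with $\pref$ (its outcomes form a subset of those of $\arn$), the inductive hypothesis yields optimal deterministic stationary pairs $(\sopt_i,\topt_i)$ in $\arn_i$ for $i\in\{1,2\}$.

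The combination step uses the one-player hypothesis twice. I first read $\topt_1$ as a deterministic stationary strategy of \m\ in $\arn$ (no extension is needed, since \m's available actions are the same in $\arn$ and $\arn_1$) and regard the resulting arena as a Markov decision problem for \M\ compatible with $\pref$; by hypothesis it admits an optimal deterministic stationary response $\widetilde{\sigma}$. The action $\widetilde{\sigma}(s_0)\in\act{s_0}$ belongs to $A_1$ or $A_2$; let $i^\star\in\{1,2\}$ be the corresponding index. I then fix $\widetilde{\sigma}$ in $\arn$ and, applying the one-player hypothesis again to \m's Markov decision problem, obtain a deterministic stationary Min response $\widetilde{\tau}$ against $\widetilde{\sigma}$. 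The candidate optimal pair is $(\widetilde{\sigma},\widetilde{\tau})$: the best-response inequality on Max's side against arbitrary $\sigma\in\strseta$ holds by construction of $\widetilde{\sigma}$ in the first Markov decision problem, whereas for Min I argue by a transitive chain along $\pref$: against any $\tau$, the outcome $\Outcome(\arn,s,\widetilde{\sigma},\tau)$ dominates $\Outcome(\arn,s,\widetilde{\sigma},\widetilde{\tau})$ by the second one-player optimality, and $\Outcome(\arn,s,\widetilde{\sigma},\widetilde{\tau})$ is dominated by $\Outcome(\arn_{i^\star},s,\sopt_{i^\star},\topt_{i^\star})$ via the sub-arena optimality combined with the one-player optimality of $\widetilde{\sigma}$ against $\topt_1$ viewed inside $\arn_{i^\star}$.

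The main obstacle I anticipate is exactly this last chaining: the sub-arena optimality of $(\sopt_i,\topt_i)$ only controls outcomes in $\arn_i$, while the required inequality~\eqref{eq:optimalityi} must hold in $\arn$ against all, possibly randomized and history-dependent, opponent strategies. The resolution is to funnel every such comparison through a Markov decision problem in which one of the two strategies has been pinned down to a deterministic stationary one, so that the one-player hypothesis can be invoked as a black box to dominate any competing opponent strategy. An additional delicate point is state $s_0$ itself, at which the two inductive pairs live in incompatible sub-arenas and cannot be compared directly; the case split on whether $\widetilde{\sigma}(s_0)$ lies in $A_1$ or $A_2$, combined with the transitivity (but not totality) of $\pref$, is what makes the chain of inequalities close up into a genuine optimality statement in $\arn$.
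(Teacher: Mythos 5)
There is a genuine gap in your combination step: the pair $(\widetilde{\sigma},\widetilde{\tau})$ you produce is a chain of \emph{iterated best responses} ($\topt_1\mapsto\widetilde{\sigma}\mapsto\widetilde{\tau}$), and such a chain does not close into an optimal pair. Concretely, the construction of $\widetilde{\sigma}$ in the first Markov decision problem gives $\Outcome(\arn,s,\sigma,\topt_1)\pref\Outcome(\arn,s,\widetilde{\sigma},\topt_1)$ for all $\sigma$, whereas optimality of the pair requires $\Outcome(\arn,s,\sigma,\widetilde{\tau})\pref\Outcome(\arn,s,\widetilde{\sigma},\widetilde{\tau})$; these are different statements unless $\widetilde{\tau}=\topt_1$, and nothing forces that. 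The underlying problem is that your anchor $\topt_1$ is only guaranteed to be a good strategy for $\m$ \emph{inside the sub-arena} $\arn_1$: if $\widetilde{\sigma}(s_0)\in A_2$, the play enters territory where $\topt_1$ carries no optimality guarantee at all, so $\Outcome(\arn,s,\widetilde{\sigma},\topt_1)$ may overestimate what $\M$ can actually secure, $\widetilde{\tau}$ then punishes $\widetilde{\sigma}$, and $\M$ may in turn profitably deviate from $\widetilde{\sigma}$ against $\widetilde{\tau}$. Your final chain through $\Outcome(\arn_{i^\star},s,\sopt_{i^\star},\topt_{i^\star})$ compares values but never yields the missing Max-side best-response inequality.

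The paper avoids this by working in the \emph{split} of $\arn$ on $\sepst=s_0$ (one copy of the arena per action $x\in\act{\sepst}$, glued at $\sepst$, remembering the last action chosen there). On the split one can define a single strategy $\covtopt$ of $\m$ that coincides with $\covtopt_x$ on the $x$-component, i.e.\ a strategy that is \emph{simultaneously} optimal against $\M$ whichever action $\M$ last played at $\sepst$ --- this is the uniformly good anchor your $\topt_1$ fails to be; the one-player hypothesis is then applied to $\M$'s decision problem against this $\covtopt$ to select the action $e$ at $\sepst$, and the resulting $\covsopt_e$ lifts back to a deterministic stationary optimal strategy of $\M$ in $\arn$ precisely because the play then never leaves the $e$-component. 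Note also that even a corrected version of your argument would only produce a deterministic stationary optimal strategy for $\M$; the paper obtains one for $\m$ by repeating the construction with a separation state controlled by $\m$ and then invoking the exchange property of optimal pairs in zero-sum games, a step your proposal omits.
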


%\begin{theorem}[One-to-two theorem for games with perfect-information]
%\label{theo:weak}
%Let $\rewards$ be a set of rewards and $\mathcal{A}_\rewards$
%the set of all arenas with rewards in $\rewards$.
%Let
%\[
%\mathcal{G}=
%\left\{
%(\arn,\pref_\arn)\mid \arn \in \mathcal{A}_\rewards
%\right\}
%\]
%be a collection of games on those arenas.
%Assume that the preference relations of the games in $\mathcal{G}$ are compatible
%in the sense where
%for every pair $(\arn_1, \pref_{\arn_1}),(\arn_2,\pref_{\arn_2})$ of deterministic games in $\mathcal{G}$
%the preference relations
%$\pref_{\arn_1}$ and $\pref_{\arn_2}$ coincide on outcomes common to both arenas $\arn_1$ and $\arn_2$.
%
%If all \emph{one-player} games in $\mathcal{G}$ have optimal deterministic stationary strategies
%then for all games in $\mathcal{G}$ both players have optimal deterministic stationary strategies.
%\end{theorem}

Again remark that $\pref$ is not assumed to be total, and actually we only need to compare probability measures that arise as outcomes in an arena $\arn$ with rewards $\rewards$
(i.e. elements of $\outcome(\arn)$). In particular, it is enough to define $\pref$ for probability measures whose support is included in the set of histories that take finitely many different values in $\rewards$.

Although the formulations of the two specializations of our theorem are quite close,
they apply to different situations.
The first one adresses deterministic games,
where players are restricted to deterministic strategies and the preference relations
are defined only on infinite sequences of rewards,
while the second one adresses the full class of stochastic games where players use 
behavioural strategies and their preferences are expressed with respect to probability measures
on  infinite reward sequences.
As a matter of fact there is no way, at least to our knowledge,
to deduce the second theorem from the first one nor the contrary.

\subsection{Examples revisited}
\label{sec:revisited}

%Let $f$ be a payoff mapping from the set of infinite reward sequences 
%$\irewards$
%to the set of real numbers. then the set of games 
%player on arenas with reward set \rewards\ and
%with payoff $f$ is subgame and covering closed.
%In particular mean-payoff and parity games of Examples~\ref{ex:mean}
%and \ref{ex:parity} are subgame and covering closed.
%
%However also games defined in Examples~\ref{ex:mp} and \ref{ex:over}
%are 
%
%All four classes of games defined in Section~\ref{sec:payoff} are covering
%and subgame closed
%For example
% if $\embed : \covarena \to \arn$  and $\rew$ is a reward mapping on 
%\arn\ then we can define the reward mapping $\cv{\rew}$ 
%on \covarena\ by setting 
%$\cv{\rew}(\cv{s},a,\cv{s^*})=\rew(\embed(\cv{s}),a,\embed(\cv{s^*}))$ 
%for $a\in\act{\cv{s}}$
%and use the same optimality criteria on \arn\ and \covarena.

Bierth~\cite{bierth} shows that one-player games 
of Example~\ref{ex:mean}
have \osd\
strategies.
A more readable proof of this fact is given in
\cite{Neyman:2003b}. 
%By Corollary~\ref{cor:lolita} this implies that two-player mean-payoff
%games have \osd\  strategies. 

One-player parity games of Example~\ref{ex:parity} have also
\osd\  strategies \cite{deAlfaro:1997}, thus the
same holds for two-player parity games (the latter fact was proved by
several authors but we can see now that this is a consequence of the
result for one-player games).

Unlike parity games, one-player \emph{simple} parity games
are \emph{not} positional, however they become positional when played on deterministic arenas.
This is discussed in the next subsection.

%A suitable  example was
%shown to us by Florian Horn, see 
%Figure~\ref{fig:florian}.
% However, when played on deterministic arenas,
% one-player simple parity games are positional:
% if the player is able to follow a path whose maximal priority is even, he can choose this path to be ultimately periodic.
% Thus two-player simple parity games on deterministic arenas 	are positional.
% 
%\begin{figure}[ht]
%\begin{center}
%\includegraphics[scale=0.7]{}
%\end{center}
%\caption{{\red TODO change example. A simple parity game with all states controlled by player \M.
%Each transition $t$ is labelled by a triple $((b,p),r)$ where $b$ is an action, $r$ the priority and $p$ is the transition
%probability of $t$. There is only one randomized action $\mathbf{try}$,
%its execution yields with probability $1/2$ either the priority $1$ or $2$.
%}
%}
%\label{fig:florian}
%\end{figure}

One-player games of Example~\ref{ex:mp} have \osd\
 strategies \cite{derman}. Theorem~\ref{theo:weak}
allows us to deduce that the same holds for two-player games
with perfect information.

One-player games of Example~\ref{ex:over} do not have 
\osd\ strategies. For example if the game is made of two simple cycles of length $4$ on the initial state, labelled with $0,1,1,0$
on one hand and $1,0,0,1$ on the other hand then neither cycle dominates the other.

\subsection{Application to deterministic games}
\label{sec:deterministic}

%In this case it is convenient to consider $\tpc$ as a mapping
%\[ \tpc : \{ (s,a) : \text{$s\in\states$ and $a\in\act{s}$} \} \to
%\states
%\]
%such that $\tpc(s,a)$ is the state that we reach from $s$ upon 
%execution of $a$.
A deterministic game is a game played on a deterministic arena with
the additional constraint that
both players can use only deterministic strategies.

Even if deterministic games are usually much simpler to analyze 
than their stochastic counterparts, this does not mean 
that they are always easy.  
They are also interesting by themselves for at least two reasons.
First of all
deterministic games  prevail 
in computer science applications related to automata theory, logic and 
verification \cite{dagstuhl}.
The second reason is that, concerning our main problem -- the existence
of \osd\ strategies, deterministic games can differ from
their stochastic counterparts.
%\iffalse
%More specifically, suppose that $\beta : \states \to \zplus$ is the
%priority mapping as in Exercise~\ref{ex:parity}
%and the payoff for of an infinite play is $\sup_n \beta(\ps_n) \bmod
%2$
%(i.e. we replace $\limsup$ used in the payoff definition
% for parity games by $\sup$).
%It is easy to see that for one-player deterministic games
%\fi

It turns out that for the payoff mapping $f$ associated to simple parity games,
all deterministic games with payoff $f$ have \osd\ strategies but the
same is not true for simple stochastic parity games.
A suitable
  example was
shown to us by Florian Horn, see 
Figure~\ref{fig:florian}.

\begin{figure}[ht]
\begin{center}
\includegraphics[scale=0.7]{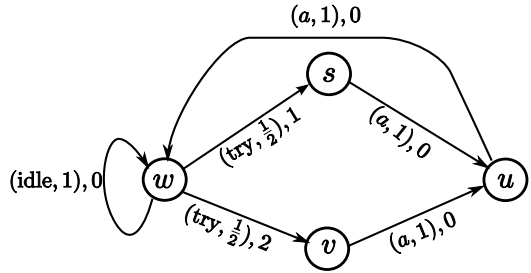}
\end{center}
\caption{A simple parity game with all states controlled by player \M.
Each transition $t$ is labelled by a triple 
$((b,p),r)$ where $b$ is an action, $r$ the reward and $p$ is the transition
probability of $t$. There is only one randomized action $\mathbf{try}$,
its execution yields with probability $1/2$ either the reward $1$ or $2$.
}
\label{fig:florian}
\end{figure}

In the game in Figure~\ref{fig:florian} only the state $w$ has two available actions
$\mathbf{try}$ and $\mathbf{idle}$.
The payoff of \M\ is determined in the
following way.
Let $r= r_0r_1r_2\cdots$ be the infinite sequence of rewards,
player \M\ obtains payoff $1$ if this sequence contains at least one occurence
of reward $1$ and no occurence of reward $2$,
% the goal of player \M\ is that the
%highest reward seen during
%the play is odd, thus 
%Then player \M\ obtains payoff $1$ only for histories containing 
%the transition $(w,\mathbf{try},s)$  but not containg the transition 
%$(w,\mathbf{try},v)$, 
otherwise \M\ gets payoff $0$. Clearly the
optimal strategy for player \M\ is to execute the action try exactly once. If the transition with reward  $1$ is taken
 then returning to $w$ he should always
execute $\mathbf{idle}$, if the transition with reward $2$ is taken then he can
do anything, his payoff will be $0$. Thus his maximal expected payoff
is $1/2$. This strategy is not stationary since player \M\ should
remember if it is the first visit to $w$ or not.

On the other hand it is easy to see that all stationary strategies
(randomized or not) yield the expected payoff $0$.

\subsection{A remark on 
one-player stochastic games with sub-mixing payoff functions}
\label{sec:mixing}
In light of Theorem~\ref{theo:weak} 
it is important to be able to decide if one-player 
games have \osd\   strategies.
A convenient way to tackle this problem 
was discovered by the first author~\cite{stacs}.

Let $f : \actions^\infty \to \RR$ be a bounded (either from below or
from above) and measurable.
% mapping.
% from infinite sequences of actions to
%$\RR$.

We say that $f$ is prefix-independent $f$ does not depend on the
finite initial segment, i.e. $f(a_1a_2a_3\ldots)=f(a_2a_3\ldots)$ for
each infinite sequence $a_1a_2a_3\ldots\in\actions^\infty$.
And $f$ is said to be  sub-mixing if 
\[
f(a_1a_2a_3\cdots)\leq\max\{f(a_{i_1}a_{i_2}\cdots),f(a_{j_1}a_{j_2}\cdots)\} ,
\]
for each infinite sequence $a_1a_2a_3\ldots\in\actions^\infty$ of
actions and each partition of \NN\ onto two infinite sets 
$I=\{i_1<i_2<i_3<\ldots\}$, $J=\{j_1<j_2<j_3<\ldots\}$.

In \cite{stacs} it is proved that if $f$ is prefix independent 
and  sub-mixing then for
each one-player game $\Gamma=(\arn,\pref)$ 
with the preference  relation induced by $f$
and such that
 \arn\ is
controlled by player \M, this player has an \osd\
  strategy.
 
In order  to prove that one-player games controlled by player \m\ 
have \osd\  strategies it suffices to verify 
that $-f$ is sub-mixing.
Thus verifying that $f$ and $-f$ are prefix independent and
 sub-mixing can be used to prove
that two-player games with the preference relation induced by $f$ 
have

Let us note that the payoff $f$ of the parity game of
Example~\ref{ex:parity} is prefix independent and sub-mixing  and the
same holds for $-f$.

For the mean-payoff games of Example~\ref{ex:mean} 
the payoff $f$ is sub-mixing, but $-f$ is not (replacing $\limsup$ by
$\liminf$ we obtain the payoff mapping which is not sub-mixing).
However, for if players use deterministic stationary strategies then
we a get a finite state Markov chain and for the resulting outcome \p\
we have
$\limsup \frac{\sum_{i=1}^n \rew(s_i,a_i,s_{i+1})}{n} = 
\lim \frac{\sum_{i=1}^n \rew(s_i,a_i,s_{i+1})}{n}$ almost surely,
where $s_i,a_i,s_{i+1})$ is the transition at stage $i$.
This can be used to prove that one-player games controlled by
player \m\ have \osd\ strategies we can apply
Theorem~\ref{theo:weak} once again.

%%%%%%%%%%%%%%%%%%%%%%%%%%%%%%%%%%%%%%%%%%%%%%%%%%%%%%%%%%%%%%%%

\section{From one- to two-player games: the general theorem}

Our main result relies on the notions of subarenas and splits of an arena on a state.
%which play an 
%In this section we define a special  covering that will play
%the essential role in the proof of Theorem~\ref{theo:main}.

\subsection{Subarenas and subgames}

\label{sec:subarena}
An arena $\arn'=\arenaprime$
is a subarena of another arena\newline
$\arn=\arena$ if
 $\trans'\subseteq\trans$ and
 $\tpc'$ and $\rew'$
 are the restrictions of 
$\tpc$   and
   $\rew$ to 
$\trans'$.

Clearly the set  $\trans'$ of transition 
of subarena $\arn'$  satisfies the following conditions:
\begin{enumerate}[(a)]
\item
if $(s,a,s')\in\trans'$ and $(s,a,s'')\in\trans$ then
$(s,a,s'')\in\trans'$, i.e. for each state $s$ and each action $a$ either
we keep all transitions $(s,a,\cdot)$ in $\arn'$ or we remove them all,
\item
for each $s\in\states$ there exist $a\in\actions$ and
$s'\in\states$
such that $(s,a,s')\in\trans'$.
\end{enumerate}
By definition of an arena, both conditions (a) and (b) are necessary for $\arn'$ to be an arena:
(a) is necessary for the sum of transition probabilities induced by a pair of states and actions
to be $1$ and  (b) because in each state there should be at least one available action.
They are also sufficient: if $\trans'\subset\trans$ satisfies (a) and (b) then
there exists a unique subarena $\arn'$ of \arn\ having $\trans'$ as the 
set
of transitions.

If $\arn'$ is a subarena of \arn\ then all finite histories
consistent with $\arn'$ are consistent with \arn.
% The strategies in $\arn'$
%\begin{proposition}
%A strategy $\sigma'$ in $\arn'$ is the restriction
 
We say that a strategy $\sigma$ in $\arn$ is 
compatible with $\arn'$ if its restriction
to $\hist(\arn')$ is a strategy in $\arn'$.
By definition of a strategy, this requires
that $\sigma$ restricted to $\hist(\arn')$ 
only puts positive probability on actions available in $\arn'$,
i.e. formally for every $s_1a_2s_2\cdots s_n\in \hist(\arn')$ and $a\in \actions$,
\[
\sigma(s_1a_2s_2\cdots s_n)(a)> 0 \implies \exists t\in \states, (s_n,a,t)\in\trans'\enspace.
\]
This condition is also sufficient for $\sigma$ to be compatible with $\arn'$.

%Such 
%However not every restriction to $\hist(\arn')$ of a strategy
%in $\arn$ is a strategy in $\hist(\arn)$
%
%%In this case $\sigma'$
%%\end{proposition}
%%This is equivalent to requiring that $\sigma$ only put positive probability
%%on actions available in $\arn'$.
Conversely, every strategy $\sigma'$ in $\arn'$ defined on $\hist(\arn')$
can be extended in an arbitrary way to a strategy in $\arn$
defined on $\hist(\arn)$.
This implies the inclusion
$\outcome(\arn')\subseteq\outcome(\arn)$ of the sets of outcomes.
Therefore a preference \pref\ relation
defined on outcomes of \arn\  is also a
preference on outcomes of $\arn'$.

\begin{definition}[Subgame induced by a subarena]\label{defi:subgame}
Let $\arn'$ be a subarena of $\arn$ and 
$G=(\arn,\strseta,\strsetb,\pref)$ a game played in $\arn$.
The subgame of $G$ induced by $\arn'$ is the game $G'=(\arn',\strseta',\strsetb',\pref')$
where:
\begin{itemize}
\item[i)] $\strseta'$
% (resp. $\strsetb'$)
is the set of strategies obtained by restricting to $\hist(\arn')$ the strategies in $\strseta$
%(resp. $\strsetb$) 
compatible with $\arn'$,
\item[ii)] $\strsetb'$
% (resp. $\strsetb'$)
is the set of strategies obtained by restricting to $\hist(\arn')$ the strategies in $\strsetb$
%(resp. $\strsetb$) 
compatible with $\arn'$,
\item[iii)]
$\pref'$ is the restriction of $\pref$
to $\outcome(\arn',\strseta', \strsetb')$.
\end{itemize}
\end{definition}

\subsection{Split of an arena}

Intuitively, the split of an arena $\arn=\arena$ on a state
$\sepst\in\states$ consists in adding in the states of the arena
an extra information: the 
last action chosen in the state $\sepst$ by the player controlling $\sepst$. This memory is reset each time 
$\sepst$ is reached. The state $\sepst$ is called the \emph{separation state}.
An example is provided on Fig.~\ref{fig:splitexample}.

\newcommand{\sparn}{\cv{\arn}}
\newcommand{\spstates}{\covstates}
\newcommand{\sptrans}{\covtrans}
\newcommand{\sptpc}{\cv{\tpc}}
\newcommand{\spplayer}{\covplayer}
\newcommand{\sprew}{\cv{\rew}}
\newcommand\sparena{\ensuremath{(\spstates,\actions,\sptrans,\sptpc,\spplayer,\rewards,\sprew)}}
\newcommand{\spproj}{\pi}
\newcommand{\Spproj}{\Pi}
\newcommand{\splift}{\phi}
\newcommand{\Splift}{\Phi}

\begin{figure}[h]
\begin{gpicture}
\gasset{} 
  \put(0,0) {
 \drawpolygon[Nframe=y](-10,20)(30,20)(30,-40)(-10,-40)

  \drawline[Nframe=y,AHnb=0](-10,-30)(0,-30)(0,-40)
\node[Nframe=n](r)(-5,-35){$\arn$}
  
	\node(s)(0,0){$s$}
  \node(omega)(20,-20){$\omega$}
  
  \drawedge[curvedepth=5](omega,s){$a,\frac{1}{2}$} 
  \drawloop[ELside=l,loopangle=270](omega){$a,\frac{1}{2}$} 
  \drawedge[curvedepth=15,ELside=l](omega,s){$b,1$} 
  
  \drawloop[ELside=l,loopangle=90](s){$a,1$} 
  \drawedge[ELside=l,ELpos=40,curvedepth=5](s,omega){$b,1$} 
%  \drawedge[ELside=l,ELpos=30,curvedepth=15](s,omega){$a,\frac{1}{2}$} 
  }
  
    \put(65,0) {
 \drawpolygon[Nframe=y](-10,20)(50,20)(50,-40)(-10,-40)
 
  \drawline[Nframe=y,AHnb=0](-10,-30)(0,-30)(0,-40)
\node[Nframe=n](r)(-5,-35){$\sparn$}

	\node(s)(0,0){$s_a$}
  \node(omega)(20,-20){$\omega$}
	\node(sb)(40,0){$s_b$}
  
  \drawedge[curvedepth=5,color=blue](omega,s){$a,\frac{1}{2}$} 
  \drawloop[ELside=l,loopangle=270](omega){$a,\frac{1}{2}$} 
  \drawedge[curvedepth=-15,ELside=r](omega,sb){$b,1$} 

  \drawloop[ELside=l,loopangle=90](s){$a,1$} 
  \drawedge[ELside=l,ELpos=30,curvedepth=5](s,omega){$b,1$} 

  \drawloop[ELside=l,loopangle=90](sb){$a,1$} 
  \drawedge[ELside=r,ELpos=30,curvedepth=-5](sb,omega){$b,1$} 
}
 \end{gpicture}
 \caption{\label{fig:splitexample} Splitting an arena. The arena $\arn$ on the left handside has two states $\{s,\omega\}$
 and two action $\{a,b\}$, the transitions 
 %$\{(s,a,s),(s,a,\omega), (\omega,a,s), (\omega,a,\omega),(\omega,b,s) \}$ 
 and their probabilities are represented graphically, for example $\tpc(\omega,a,s)=\frac{1}{2}$.
 The split of $\arn$ on the separation state $\omega$ is the arena $\sparn$ represented on the right handside.}
\end{figure}

Formally, the split of arena $\arn$ on the separation state $\sepst$ is
the arena $\sparn=\sparena$ defined as follows.
%\begin{itemize}
%\item
The actions and rewards are the same in both arenas $\arn$ and $\sparn$.
%\item
A state of $\sparn$ is either the separation state $\omega$ or a pair made of another state
and an action available in the separation state,
thus
\[
\spstates=  \{\omega\} \cup  (\states \bh \{\omega\}) \times \act{\sepst}  \enspace,
\]
where $ \act{\sepst}$ denotes the set of actions available in state $\sepst$.
For every state $s \in \states \bh \{\omega\}$ and action $x\in\act{\sepst}$ we denote 
$s_x = (s,x)$. 
To make the notation uniform we set $\sepst_x=\sepst$
 i.e. $\sepst_x$ is just
an alias name for $\sepst$. There is a natural mapping
\[
\spproj : \spstates \to \states
\]
which sends $s_x$ to $s$.
%\begin{align*}
%\phi : &~\spstates ~\to \states\\
%&~s_x \to s\enspace. 
%\end{align*}
%\item
The rewards and the controlling  player are the same in states $s$ and $s_x$: we set 
% controllers The reward mapping $\sprew: \spstates \to \rewards$ and the player mapping $\spplayer : \spstates \to \{\M, \m\}$ are the composition of $\rew$ and $\player$ with $\spproj$, i.e.
$\sprew = \rew \circ \spproj$ and $\spplayer = \player \circ \spproj$.
The transitions of $\sparn$ and their probabilities are inherited from $\arn$
so that the second component of a state in $\sparn$ keeps track of the last action chosen in the separation state:
\begin{align*}
&\sptrans = \{ (\sepst, x, t_x) \mid  (\sepst, x, t) \in \trans \} \cup \{ (s_x,a,t_x) \mid  (s,a,t) \in \trans \land s \neq \sepst \}\enspace,\\
& \text{for every } (s_x,a,t_y)\in \sptrans \text{ we set } \sptpc(s_x,a,t_y) = \tpc(s,a,t)\enspace.
\end{align*}

The separation state $\sepst$ separates the states
of the split $\sparn$ in the following sense.
For every action $x\in\act{\sepst}$ we denote
\[
\spstates_x = (\states \bh \{\sepst\})\times \{x\}\enspace.
\]
Then 
$\spstates$ is partitioned in $\{\sepst\}$ on one hand and the sets $(\spstates_x)_{x\in\act{\sepst}}$
on the other hand.
By construction of the split:
\begin{proposition}[Separation property]\label{prop:sep}
Every finite history of $\sparn$
starting in $\spstates_x$ and ending in $\spstates_y$ with $x\neq y$
passes through
$\sepst$.
\end{proposition}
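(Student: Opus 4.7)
The plan is to prove the contrapositive: any finite history of $\sparn$ that avoids the separation state $\sepst$ stays inside a single slice $\spstates_x$, and therefore cannot have its starting and ending states in two different slices $\spstates_x$ and $\spstates_y$ with $x\neq y$.

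The key observation is an invariance property read off directly from the definition of $\sptrans$. Inspecting the two families of transitions in $\sparn$: the transitions $(\sepst, x, t_x)$ originate in $\sepst$, while the transitions $(s_x, a, t_x)$ with $s\neq \sepst$ both start and end in the same slice $\spstates_x$ (the second coordinate $x$ is simply copied from source to target). Consequently, for any transition $(u,a,v)\in \sptrans$, either $u=\sepst$, or there exists a unique $x\in\act{\sepst}$ with $u,v\in\spstates_x$.

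The proof then proceeds by a straightforward induction on the length $n$ of a finite history $h = s_1 a_1 s_2 \cdots s_n$ in $\sparn$ that does not visit $\sepst$. The base case $n=1$ is trivial, since $s_1$ by assumption lies in some $\spstates_x$. For the inductive step, the previous observation applied to the transition $(s_{n-1}, a_{n-1}, s_n)$ forces $s_n$ to lie in the same $\spstates_x$ as $s_{n-1}$, because $s_{n-1} \neq \sepst$. Hence every state of $h$ belongs to one common slice $\spstates_x$.

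Applying this to a history starting in $\spstates_x$ and ending in $\spstates_y$ with $x\neq y$ yields a contradiction unless $h$ visits $\sepst$, establishing the claim. There is no real obstacle here: the whole argument is bookkeeping on the definition of $\sptrans$, and the main point to be careful about is that the separation state $\sepst$ lies outside every $\spstates_x$ (by the convention $\spstates_x = (\states\setminus\{\sepst\})\times\{x\}$), so that visiting $\sepst$ is indeed the only way to change slice.
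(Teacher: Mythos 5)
Your proof is correct and is precisely the argument the paper leaves implicit (the paper asserts the proposition with only the remark ``by construction of the split'' and gives no written proof), namely: read off from the definition of $\sptrans$ that the second coordinate is copied from source to target outside $\sepst$, then induct on the length of a history avoiding $\sepst$. One minor imprecision: your stated invariant ``either $u=\sepst$, or $u,v\in\spstates_x$ for a unique $x$'' misses transitions whose \emph{target} is $\sepst$ (e.g.\ $(s_a,b,\sepst)$ in the example of Fig.~\ref{fig:splitexample}, where $u\in\spstates_a$ but $v=\sepst\notin\spstates_a$); the correct statement is ``either $u=\sepst$, or $v=\sepst$, or $u,v$ lie in a common $\spstates_x$'', and since your induction only ever applies the observation to transitions of a history that avoids $\sepst$ entirely, both endpoints are non-$\sepst$ there and the argument goes through unchanged.
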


\iffalse
\subsection{One-player games induced by deterministic stationary strategies}
\label{sec:substr}
Let $\arn=\arena$ be an arena and let $\tau\in\strsetb(\arn)$ be a
deterministic stationary strategy of player \M.
We define $\arn[\tau]=(\states,\actions,\rewards,\player,\tpc')$ 
to be the one-player subarena of \arn\ controlled by
player \M\ such that
\[ \forall s\in\states,\quad
\tpc'(s)=\begin{cases} 
            \act{s} & \text{if $\player(s)=\M$},\\
            \{ \plcd{\tau}{s} \} & \text{if $\player(s)=\m$}.
\end{cases}
\]
Thus in $\arn[\tau]$ player \m\ is limited to the moves allowed 
by $\tau$
and  $\tau$ is the unique strategy available to him, i.e.
$\strsetb(\arn[\tau])=\{\tau\}$.
On the other hand,
player \M\ keeps all his moves, 
in particular he has the same strategies in \arn\ and $\arn[\tau]$,
$\strseta(\arn)=\strseta(\arn[\tau])$.  

If \pref\ is a preference relation on $\outcome(\arn)$ then  the same
preference relation  used  on
$\arn[\tau]$ yields the induced one-player game
$\Gamma[\tau]=(\arn[\tau],\pref)$.
Note that 
a strategy $\sigma\in\strseta(\arn)$ of player \M\ is optimal in the
one-player game $\Gamma[\tau]$  if and only if, for all
$\sigma'\in\strseta(\arn)$,
$\prb{s}{\sigma'}{\tau}\pref\prb{s}{\sigma}{\tau}$, i.e. if $\sigma$
is a best response of player \M\ to the strategy $\tau$ of \m\ on 
$\Gamma=(\arn,\pref)$.
\fi

\subsection{Projecting histories and lifting strategies}

\begin{figure}[h]
\begin{gpicture}
\gasset{Nadjust=wh,Nframe=n}
 
 \drawpolygon[Nframe=y](-35,10)(90,10)(90,-25)(-35,-25)
  \put(0,5) {
\node(A)(-10,-5){\bf Projecting histories}
  \node(C)(20,-5){$\hist(\sparn)$}
  \node(G)(20,-10){$s_1a_1s_2\cdots s_n$}
  \node(D)(60,-5){$\hist(\arn)$}
  \node(F)(60,-10){$\spproj(s_1)a_1\spproj(s_2)\cdots \spproj(s_n)$}
  \drawedge(C,D){$\spproj$} 
  }
 \put(0,-10) {
 \node(A)(-10,-5){\bf Lifting  strategies}
  \node(C)(20,-5){$\strseta(\sparn)$}
  \node(G)(20,-10){$\sigma \circ \spproj$}
  \node(D)(60,-5){$\strseta(\arn)$}
  \node(F)(60,-10){$\sigma$}
  \drawedge[ELside=r](D,C){$\Spproj$} 
  }
  
%  \put(60,0) {
%  \node(A)(0,5){\bf Lifting strategies}
%  \node(C)(0,-5){$ \strseta(\sparn)$}
%  \node(G)(0,-10){$\sigma \circ \spproj$}
%  \node(D)(0,-30){$\strseta(\arn)$}
%  \node(F)(0,-35){$\sigma$}
%  \drawedge(D,G){} 
%%  \node(E)(0,-35){$\cv{\sigma} = $}
% }
 
  \put(0,-45) {
\node(A)(-10,10){Fix $x \in\act{\sepst}$}
   \drawpolygon[Nframe=y](-35,15)(90,15)(90,-35)(-35,-35)
  \put(0,5) {
\node(A)(-10,-5){\bf Lifting histories}
  \node(C)(20,-5){$\hist(\sparn)$}
 
  \node(G)(-12,-13){$\splift_x(s) = s_x$}
  \node(Gp)(-12,-20){$\spproj \circ \splift_x  \text{ is the identity.}$}

%   \splift_x(s_0as_1)= \begin{cases}
%  	(s_0,x)a (s_1,x) \text{ if } s_0 \neq \sepst\\
%	\sepst a (s_1,a) \text{ if } s_0 = \sepst
%	\end{cases}
  \node(D)(60,-5){$\hist(\arn)$}
  \node(F)(60,-10){}
  \drawedge[ELside=r](D,C){$\splift_x$} 
  }
 \put(0,-20) {
 \node(A)(-10,-5){\bf Projecting  strategies}
  \node(C)(20,-5){$\strseta(\sparn)$}
  \node(G)(20,-10){$\sigma$}
  \node(D)(60,-5){$\strseta(\arn)$}
  \node(F)(60,-10){$\sigma \circ \splift_x$}
  \drawedge(C,D){$\Phi_x$} 
  
%      \node(G)(-12,-13){$\Phi_x \circ \Pi \text{ is the identity.}$ }

  }
  }
%
%   \put(0,-90) {
%   \drawpolygon[Nframe=y](-35,5)(90,5)(90,-15)(-35,-15) 
%    \node(G)(0,-5){ }
%    \node(G)(40,-5){$(\Phi_x \circ \Pi)(\sigma) = \sigma$ }
%}
\end{gpicture}
  \caption{\label{fig:projlift} Projections and liftings of histories and strategies and their characteristic properties.
  Strategies of $\m$ in $\strsetb(\arn)$ and $\strsetb(\sparn)$ are projected and lifted in a symmetric way.}
\end{figure}

Let $\sparn$ be the split of an arena $\arn$ on some separation state $\sepst$.

There is a natural projection $\spproj$
of finite histories in the split $\sparn$ to finite histories in the original
arena $\sparn$ and conversely a natural lifting $\Spproj$
of strategies in the original arena to strategies in the split.
These mappings are represented on Fig.~\ref{fig:projlift}. 

The projection $\spproj : \hist(\sparn) \to \hist(\arn)$ is the natural extension
to histories of the mapping $\spproj:\spstates\to\states$
which maps $\omega$ to $\omega$ and forgets the second component of other states ($\forall s_x \in \spstates, \spproj(s_x)=s$).
For every 
finite history $h=s_1a_1s_2\cdots s_n \in \hist(\sparn)$ in $\sparn$,
we set $
\spproj\left(h\right)=\spproj(s_1)a_1\spproj(s_2)\cdots \spproj(s_n).
$
That $\spproj\left(h\right)$ is a finite history in $\hist(\arn)$
is immediate from the definition of $\sptrans$.

The mapping $\Spproj:\strseta(\arn) \to \strseta(\sparn)$
 transforms a strategy $\sigma\in \strseta(\arn)$ to the strategy
$\Spproj(\sigma) = \sigma\circ \spproj$
 called the \emph{lifting} of $\sigma$
from $\arn$ to $\sparn$. Obviously $\sigma \circ \spproj$
is a strategy in $\sparn$
because
for every state $s_x\in\spstates$, 
the same actions are available in state $s_x$ of the arena $\sparn$
and  in the state $s$ of the arena $\arn$.

%because for every $s\in\states$ and $x\in\act{\sepst}$,
%
%the same actions are available in the states $s$ (in $\arn$) and $s_x$ (in $\sparn$).

\subsection{Lifting histories and projecting strategies}

There is a canonical way to lift histories from the
original arena to the splitted arena, provided the
initial state is fixed.
\begin{proposition}
\label{lem:imageco}
Fix an action $x\in\act{\sepst}$.
For every state $s$ of $\arn$
and history $h\in\hist(\arn,s)$ starting in $s$ there exists a unique history $\splift_x(h)\in \hist(\sparn,s_x)$ such that $\pi(\phi_x(h))=h$.
%Denote 
%$
%\splift_x : \hist(\arn) \to \hist(\sparn)
%$
%the mapping which maps $h$ to $h_x$.
%\item
F every state $s\in \states$, $\splift_x(s)=s_x$ and
in particular $\splift_x(\omega)=\omega$.
Moreover, for every for every state $s\in\states$
and
strategies
$\cv{\sigma}\in\strseta(\sparn)$
and $\cv{\tau}\in\strsetb(\sparn)$,
%  and 
%history $h\in\hist(\arn,s,\sigma,\tau)$
%consistent with $\sigma$ and $\tau$
\begin{align}
\label{eq:projliftid}
&\spproj \circ \splift_x\text{ is the identity on } \hist(\arn)\\
\label{eq:projliftid4}
&\splift_x \circ \spproj \text{ is the identity on } \hist(\arn,s_x,\cv{\sigma},\cv{\tau})\enspace.
\end{align}
\end{proposition}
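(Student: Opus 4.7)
The plan is to construct $\splift_x$ by recursion on the length of the finite history $h$ and then deduce the two identity relations from the construction. The key observation is that once the current state in $\sparn$ has a fixed second component, the arena $\sparn$ deterministically tracks how this component evolves along a transition: by definition of $\sptrans$, the second component is reset to the chosen action whenever $\sepst$ is visited and is preserved otherwise. Consequently, given the lifted state before a transition and the next state in $\arn$, the next lifted state is uniquely determined.

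For the base case I set $\splift_x(s) = s_x$ for every single-state history $s \in \states = \hist^1(\arn)$; the alias convention $\sepst_x = \sepst$ makes this consistent with the requirement $\splift_x(\sepst)=\sepst$. For the inductive step, assume $h = s_1 a_1 \cdots s_n \in \hist(\arn, s)$ has a unique lift $\cv{h} = \cv{s}_1 a_1 \cdots \cv{s}_n$ with $\cv{s}_1 = s_x$ and $\spproj(\cv{h}) = h$. Given a one-step extension $h' = h \cdot a_n s_{n+1}$ with $(s_n, a_n, s_{n+1}) \in \trans$, the definition of $\sptrans$ yields a unique $\cv{s}_{n+1} \in \spstates$ with $(\cv{s}_n, a_n, \cv{s}_{n+1}) \in \sptrans$ and $\spproj(\cv{s}_{n+1}) = s_{n+1}$: namely $\cv{s}_{n+1} = (s_{n+1})_{a_n}$ if $\cv{s}_n = \sepst$, and $\cv{s}_{n+1} = (s_{n+1})_y$ if $\cv{s}_n = (s_n)_y$ for some $y \in \act{\sepst}$. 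Set $\splift_x(h') = \cv{h} \cdot a_n \cv{s}_{n+1}$. This simultaneously establishes existence, uniqueness, the starting condition $\cv{s}_1 = s_x$, and the identity $\spproj \circ \splift_x = \id$ on $\hist(\arn)$.

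For the second identity, fix $\cv{h} \in \hist(\sparn, s_x)$ and set $h = \spproj(\cv{h}) \in \hist(\arn, s)$. By the uniqueness just established, $\splift_x(h)$ is the only element of $\hist(\sparn, s_x)$ whose projection through $\spproj$ equals $h$; since $\cv{h}$ also projects to $h$, we conclude $\splift_x(\spproj(\cv{h})) = \cv{h}$. The restriction to histories consistent with a pair $(\cv{\sigma}, \cv{\tau})$ of strategies is therefore automatic, since the identity already holds on the larger set $\hist(\sparn, s_x)$.

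The only mildly delicate point is verifying in the induction step that the candidate state $\cv{s}_{n+1}$ genuinely lies in $\spstates$ and that the triple $(\cv{s}_n, a_n, \cv{s}_{n+1})$ really belongs to $\sptrans$. Both follow immediately from the two clauses defining $\sptrans$ together with the alias convention $\sepst_y = \sepst$, which keeps the case analysis well-posed when the trajectory returns to the separation state. No deeper argument is required, as the construction merely formalizes the intuition that the split $\sparn$ records the last action taken at $\sepst$ as a piece of finite memory attached to the current state.
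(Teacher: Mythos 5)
Your proof is correct and follows essentially the same route as the paper: an induction on the length of the history, with the same case split on whether the last lifted state is $\sepst$ (second component reset to the action played) or not (second component preserved), which gives existence, uniqueness, and~\eqref{eq:projliftid} simultaneously. Your derivation of~\eqref{eq:projliftid4} directly from the uniqueness of the lift (rather than the paper's ``easy induction'') is a clean and valid shortcut, and it correctly yields the identity on all of $\hist(\sparn,s_x)$, of which the strategy-consistent histories form a subset.
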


\begin{proof}
The mapping $\splift_x$ is defined inductively.
Initially,
$\splift_x$ maps $s$ to $s_x$.
Assume $\splift_x(h)$ is uniquely defined for some
$h\in \hist(\arn)$ and let $a\in\actions$ and $s\in \states$ such that $has\in \hist(\arn)$.
Let $t$ be the last state of $h$. If $t=\omega$ then there is a unique $y\in \act{\sepst}$
such that $(\omega,a,s_y)$ is a transition in $\sparn$: necessarily $y=a$.
If $t\neq \omega$ then by induction, according to property i)
there exists $b\in \act{\sepst}$ such that $t_b$ is the last state of $\splift_x(h)$.
There is a unique $y\in \act{\sepst}$
such that $(t_b,a,s_y)$ is a transition in $\sparn$: necessarily $y=b$.
Equations~\eqref{eq:projliftid} holds by definition of $\splift_x$.
Equation~\eqref{eq:projliftid4}
follows from an easy induction. 
%Equation~\eqref{eq:projliftid3} holds because any history in 
%$\spproj^{-1}(h) \cap \hist(\arn,s_x,\Spproj(\sigma)\Spproj(\tau))$ satisfies i) and ii) thus~\eqref{eq:projliftid3} holds by unicity of $\splift_x(h)$.
\end{proof}

Beware that the lifting of a finite history $s_1a_1s_2\cdots s_n$ 
in $\arn$ is in general different from $\splift_x(s_1)a_1\splift_x(s_2)\cdots \splift_x(s_n)$
which actually may not even be a finite history in $\sparn$.
For example, in the arena of Fig.~\ref{fig:splitexample},
the lifting of $h=s,a,\omega,b,s,b,s,b,s $ with respect to $a$ is
$\splift_a(h)=s_a,a,\omega,b,s_b,b,s_b,b,s_b $
and not $s_a,a,\omega,b,s_a,b,s_a, b,s_a$.

The lifting of histories from $\arn$ to $\sparn$
can be used to project strategies from $\sparn$
to $\arn$. For this purpose for every action $x\in\act{\sepst}$
we define 
$\Splift_x : \strseta(\sparn) \to \strseta(\arn)$ as
\[
\Splift_x(\sigma)=\sigma\circ \splift_x\enspace.
\]
$\Splift_x(\sigma)$ is a strategy in the arena $\arn$ because
for every state $s_x\in\spstates$, 
the same actions are available in state $s_x$ of the arena $\sparn$
and  in the state $s$ of the arena $\arn$.
\section{The general one-to-two theorem}
\label{sec:principaltheo}

The general formulation of our theorem concerns classes of games
that have three properties: every subgame or split of a game of the class should also be in the class. Moreover in every game in the class the set of strategies available for the players should contain all the deterministic stationary strategies.

\begin{definition}\label{defi:closedgames}
Let $\mathcal{G}=(\arn_i,\strseta_i,\strsetb_i,\pref_i)_{i\in I}$ be a collection of games.
We say that $\mathcal{G}$:
\begin{itemize}
%\item
%is \emph{compatible}
%if
%for every pair $(\arn_1, \strseta_1,\strsetb_1, \pref_{1}),(\arn_2,\strseta_2,\strsetb_2,\pref_{2})$ of games in $\mathcal{G}$
%the preference relations
%$\pref_{1}$ and $\pref_{2}$ coincide on outcomes common to both arenas $\arn_1$ and $\arn_2$.
\item
is \emph{subgame-closed} if every subgame of a game in $\mathcal{G}$ is also in $\mathcal{G}$.
\item
is \emph{split-closed} if for every game $G=(\arn,\strseta,\strsetb,\pref)$ in $\mathcal{G}$, every state $\omega$
of $\arn$ there is a game $\cv{G}=(\cv{\arn},\cv{\strseta},\cv{\strsetb},\pref)$ in $\mathcal{G}$,
such that $\sparn$ is the split of $\arn$ on $\sepst$ and for every action $x$ available in $\omega$,
\begin{align}\label{eq:liftsplit}
&\Spproj(\strseta ) \subseteq \cv{\strseta}
\text { and } 
\Spproj(\strsetb) \subseteq \cv{\strsetb} \enspace,\\
&
\notag\Splift_x(\cv{\strseta} ) \subseteq \strseta
\text { and } 
\Splift_x( \cv{\strsetb}) \subseteq \strsetb \enspace.
\end{align}
%\begin{equation}\label{eq:projsplit}
%\Spproj(\strseta ) \subseteq \cv{\strseta}
%\text { and } 
%\Spproj(\strsetb) \subseteq \cv{\strsetb} \enspace,
%\end{equation}
%where $\Spproj$ is the lifting of strategies in $\arn$ to strategies in $\cv{\arn}$
%and for ,
%\begin{equation}\label{eq:liftsplit}
%\Splift_x(\cv{\strseta} ) \subseteq \strseta
%\text { and } 
%\Splift_x( \cv{\strsetb}) \subseteq \strsetb \enspace,
%\end{equation}
%where $\Splift_x$ is the projection of strategies from $\cv{\arn}$ to $\arn$
%with respect to action $x$.
\item
\emph{contains all deterministic stationary strategies} if in every game $G=(\arn,\strseta,\strsetb,\pref)$ in $\mathcal{G}$,
every deterministic stationary strategy in the arena $\arn$ is contained in $\strseta \cup \strsetb$. 
\end{itemize}
\end{definition}

The following theorem reduces the problem of the existence of \osd\
strategies from two-player games to
one-player games.

\begin{theorem}[The one-to-two theorem]
\label{theo:main}
Let $\mathcal{G}$ be a subgame-closed and split-closed collection of games,
which contains all deterministic stationary strategies.
If \osd\ strategies exist in every one-player game
of $\mathcal{G}$ then \osd\ strategies exist in every two-player game of $\mathcal{G}$.
\end{theorem}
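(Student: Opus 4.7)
I argue by induction on $n(\arn) := \sum_{s \in \states} |\act{s}|$, the total number of state-action pairs in the arena. The base case $n(\arn) = |\states|$ is trivial: every state has a unique available action, so $G$ is a finite Markov chain whose unique strategy profile is automatically deterministic stationary and optimal; moreover this game is one-player, so the hypothesis applies in any case.

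For the inductive step, if every state controlled by one of the two players has a single available action, $G$ is a one-player game and the hypothesis gives the conclusion. Otherwise, pick a state $\omega$ with $|\act{\omega}| \geq 2$; by the symmetric roles of the two players, I may assume $\omega$ is controlled by \m. For every $x \in \act{\omega}$ let $\arn^x$ be the subarena obtained by restricting Min's action set at $\omega$ to $\{x\}$. The subgame $G^x$ is in $\mathcal{G}$ by subgame-closure and satisfies $n(\arn^x) = n(\arn) - (|\act{\omega}| - 1) < n(\arn)$, so the inductive hypothesis produces an osd optimal pair $(\sopt^x, \topt^x)$ in $G^x$. Because $\mathcal{G}$ contains all deterministic stationary strategies, each pair extends canonically to deterministic stationary $\tilde{\sopt}^x \in \strseta$ and $\tilde{\topt}^x \in \strsetb$, with $\tilde{\topt}^x(\omega) = x$.

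The task reduces to selecting $x^\star \in \act{\omega}$ for which $(\tilde{\sopt}^{x^\star}, \tilde{\topt}^{x^\star})$ is osd optimal in $G$, and here the split is used essentially. Let $\sparn$ be the split of $\arn$ on $\omega$ and $\cv{G} = (\sparn, \cv{\strseta}, \cv{\strsetb}, \pref) \in \mathcal{G}$ the associated game given by split-closure. Define the deterministic stationary Max strategy $\cv{\sigma}^\star \in \cv{\strseta}$ by setting, on each Max-state $s_x \in \spstates_x$, $\cv{\sigma}^\star(s_x) := \sopt^x(s)$; intuitively $\cv{\sigma}^\star$ plays the $G^x$-optimal response on the $x$-branch of the split. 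The subgame of $\cv{G}$ in which Max is pinned to $\cv{\sigma}^\star$ is a one-player Min game lying in $\mathcal{G}$ (subgame-closure), so the one-player hypothesis furnishes an osd optimal Min strategy $\cv{\tau}^\star \in \cv{\strsetb}$ in this subgame. Set $x^\star := \cv{\tau}^\star(\omega)$ and project: $\sigma^\star := \Splift_{x^\star}(\cv{\sigma}^\star) \in \strseta$ and $\tau^\star := \Splift_{x^\star}(\cv{\tau}^\star) \in \strsetb$ (by split-closure). Direct evaluation gives $\sigma^\star = \tilde{\sopt}^{x^\star}$, and the separation property (Proposition~\ref{prop:sep}) forces the play under $(\cv{\sigma}^\star, \cv{\tau}^\star)$ starting from any state of $\{\omega\} \cup \spstates_{x^\star}$ to remain in this component, on which the dynamics are isomorphic via $\spproj$ to those of $G^{x^\star}$ played against $\sopt^{x^\star}$; this forces $\tau^\star$ to be outcome-equivalent to $\tilde{\topt}^{x^\star}$ when paired against $\sigma^\star$.

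Max's side of optimality is then routine: against $\tilde{\topt}^{x^\star}$ the one-player Max game in $G$ is identical to the one-player Max game in $G^{x^\star}$ against $\topt^{x^\star}$ (since $\tilde{\topt}^{x^\star}$ forces the $\omega$-action to $x^\star$), and in that game $\sopt^{x^\star}$ is osd optimal by the optimality of $(\sopt^{x^\star}, \topt^{x^\star})$ in $G^{x^\star}$.

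The main obstacle is the Min direction: verifying that $\tau^\star$ is Min's osd best response to $\sigma^\star$ in $G$. For any $\tau \in \strsetb$, lift it to $\cv{\tau} := \Spproj(\tau) \in \cv{\strsetb}$ and invoke the $\cv{G}$-optimality of $\cv{\tau}^\star$ against $\cv{\sigma}^\star$, which yields $\Outcome(\sparn, s_{x^\star}, \cv{\sigma}^\star, \cv{\tau}^\star) \pref \Outcome(\sparn, s_{x^\star}, \cv{\sigma}^\star, \cv{\tau})$. The left-hand side equals $\Outcome(\arn, s, \sigma^\star, \tau^\star)$ by the component-confinement argument above. The delicate part is to show that the right-hand side is $\pref \Outcome(\arn, s, \sigma^\star, \tau)$; this does \emph{not} follow from a naive coupling because $\cv{\sigma}^\star$ is not the lift of $\sigma^\star$ (the two strategies agree on $\spstates_{x^\star}$ but differ on the other branches). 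The key insight is that the ``branch-wise optimal'' construction of $\cv{\sigma}^\star$ makes $\cv{\sigma}^\star$ at least as strong for Max on every branch as $\sigma^\star$ is in the corresponding restricted subgame $G^y$, so Min's attempts to exploit the fixed strategy $\sigma^\star$ in $\arn$ by deviating at $\omega$ are dominated, in the $\pref$-direction, by her best-response outcomes against $\cv{\sigma}^\star$ in $\sparn$. Making this comparison precise, by coupling $\arn$-plays with $\sparn$-plays branch by branch and using the optimality of each $\sopt^y$ in $G^y$, is the technical heart of the proof, and it crucially relies on split-closure so that the relevant strategies can be transported between $\arn$ and $\sparn$ while remaining inside $\mathcal{G}$.
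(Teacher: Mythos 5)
Your overall architecture matches the paper's: induction on the size of the arena, branch subgames $G^x$ obtained by freezing the action at a separation state $\sepst$, the split $\sparn$, and a one-player game used to select the distinguished action. But there are two genuine gaps.

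First, the identity ``$\sigma^\star=\tilde{\sopt}^{x^\star}$'' is false, and with it the claim that one pass of the construction yields \osd\ strategies for \emph{both} players. Since $\sepst$ is controlled by \m\ in your setup, the lifted Max strategy $\Splift_{x^\star}(\cv{\sigma}^\star)$ is \emph{not} stationary: after a history in which \m\ plays $y\neq x^\star$ at $\sepst$, the lifted history sits in branch $y$ and $\Splift_{x^\star}(\cv{\sigma}^\star)$ switches to $\sopt^y$, which may disagree with $\sopt^{x^\star}$ at the same underlying state. The paper is explicit that lifting preserves stationarity only when the lifting action coincides with the action the lifted strategy itself plays at $\sepst$ — i.e.\ only for the player who controls $\sepst$. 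Consequently one pass produces a stationary optimal strategy for the controller of $\sepst$ only; the paper runs the construction twice (once with a \M-controlled state, once with a \m-controlled one) and then combines the two optimal pairs using the exchange property of optimal strategies. Your ``by symmetry I may assume $\sepst$ is controlled by \m'' conflates these two passes.

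Second, your final pair is not the right one, and the step you flag as the ``technical heart'' is left unproved. Taking $\cv{\tau}^\star$ itself as \m's component is problematic for \emph{Max's} best-response condition: $\cv{\tau}^\star$ is merely some best response to the pinned strategy $\cv{\sigma}^\star$ in the one-player game, and its restriction to branch $x^\star$ need not be an optimal strategy of $G^{x^\star}$ — it may be exploitable by Max strategies other than $\sopt^{x^\star}$, in which case $(\cv{\sigma}^\star,\cv{\tau}^\star)$ is not an optimal pair at all. Your appeal to ``outcome-equivalence with $\tilde{\topt}^{x^\star}$ when paired against $\sigma^\star$'' does not help, since Max's condition quantifies over all opponents of $\tau^\star$, not just $\sigma^\star$. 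The paper's fix is to use the one-player optimum only to \emph{select} the action $e$ and as an intermediate term, while the actual optimal pair consists of the branch-wise stationary strategy together with the branch-$e$ optimal strategy extended to the whole split; the required comparison is then a short chain of (in)equalities (the paper's \eqref{eq:zeta1}--\eqref{eq:zeta6}) alternating between restriction to a subgame, optimality of $\zeta^\opt$ in the one-player game, the separation property, and optimality of $(\covsopt_e,\covtopt_e)$ in the branch game. Incidentally, the direction you single out as delicate — transporting \m's deviations from $\arn$ to $\sparn$ — is in fact immediate from the outcome-preservation identities $\Outcome(\sparn,s_x,\cv{\sigma},\cv{\tau})=\Outcome(\arn,s,\Splift_x(\cv{\sigma}),\Splift_x(\cv{\tau}))$ and $\Splift_x\circ\Spproj=\mathrm{id}$ (the paper's Proposition~\ref{prop:liftproj}), which hold for arbitrary strategies and do not require $\cv{\sigma}^\star$ to be the lift of $\sigma^\star$; the real difficulty lies in the other best-response direction, which your sketch does not address.
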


In Section~\ref{sec:weak} are presented two specializations of Theorem~\ref{theo:main} to deterministic games (Theorem~\ref{theo:deterministic}) and games with perfect-information (Theorem~\ref{theo:weak}). In both cases a transitive relation $\pref$ is fixed, on either $\irewards$ for deterministic games or $\mesure(\ihist,\field(\ihist))$ for games with perfect-information. Both results follow from an application of Theorem~\ref{theo:main}
to the class of all games compatible with $\pref$ (all deterministic games for Theorem~\ref{theo:deterministic} and all games with perfect-information for Theorem~\ref{theo:weak}).

\newcommand{\spG}{\cv{G}}

\section{From one-player games to two-player games -- the proof}
\label{sec:mainsec}

This section is devoted to the proof of Theorem~\ref{theo:main}.
We start with some trivial though crucial properties of projections and liftings.

\begin{proposition}[Properties of liftings and projections]
\label{prop:liftproj}
Let $G=(\arn,\strseta,\strsetb,\pref)$ 
and
$\cv{G}=(\cv{\arn},\cv{\strseta},\cv{\strsetb},\pref)$
be two games on arenas $\arn$ and $\sparn$
such that $\sparn$ is the split of $\arn$ on some separating state $\sepst$.
Let $x\in\act{\sepst}$ an action available in  $\sepst$,
let $\sigma\in \strseta$ and $\tau\in \strsetb$ some strategies in $G$
and $\cv{\sigma}\in \cv{\strseta}$ and $\cv{\tau}\in \cv{\strsetb}$ some strategies in $\cv{G}$.

Lifting and projection preserve outcomes:
for every state $s\in\states$,
\begin{align*}
&\Outcome(\arn,s,\sigma,\tau)
=
\Outcome(\sparn,s_x,\Spproj(\sigma),\Spproj(\tau))\\
&\Outcome(\sparn,s_x,\cv{\sigma},\cv{\tau})=
\Outcome(\arn,s,\Splift_x(\cv{\sigma}),\Splift_x(\cv{\tau}))
\enspace.
\end{align*}

Assume moreover that 
condition~\eqref{eq:liftsplit} of Definition~\ref{defi:closedgames} holds, i.e.
\begin{align*}
&\Spproj(\strseta ) \subseteq \cv{\strseta}
\text { and } 
\Spproj(\strsetb) \subseteq \cv{\strsetb} \enspace,\\
&
\notag\Splift_x(\cv{\strseta} ) \subseteq \strseta
\text { and } 
\Splift_x( \cv{\strsetb}) \subseteq \strsetb \enspace.
\end{align*}
%\begin{align*}
%&\Spproj(\strseta ) \subseteq \cv{\strseta}
%\text { and } 
%\Spproj(\strsetb) \subseteq \cv{\strsetb} \enspace,\\
%&
%\Splift_x(\cv{\strseta} ) \subseteq \strseta
%\text { and } 
%\Splift_x( \cv{\strsetb}) \subseteq \strsetb \enspace.
%\end{align*}
If $\cv{\sigma}$ and $\cv{\tau}$ are optimal strategies in $\spG$ from state $s_x$
then 
$\Splift_x(\cv{\sigma})$ and $\Splift_x(\cv{\tau})$
are optimal in $G$ from state $s$.
If $\cv{\sigma}$ is deterministic and stationary and $x=\cv{\sigma}(\sepst)$ then also $\Splift_x(\cv{\sigma})$ 
is deterministic and stationary.
\end{proposition}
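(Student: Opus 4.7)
I verify the three claims in order: first the two outcome equalities, then the optimality transfer (which uses them), and finally the stationarity assertion.

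The plan for outcome preservation is to appeal to Proposition~\ref{lem:imageco}, which provides a bijection between $\hist(\arn,s)$ and $\hist(\sparn,s_x)$ via $\splift_x$, with inverse the restriction of $\spproj$. The construction of the split ensures that this bijection transports every datum appearing in formula~\eqref{def:prob}: transition probabilities, since $\sptpc(s_x,a,t_y) = \tpc(s,a,t)$; strategy weights, since $\Spproj(\sigma) = \sigma \circ \spproj$ by definition; rewards, since $\sprew = \rew \circ \spproj$; and the controlling player, since $\spplayer = \player \circ \spproj$. Hence cylinder probabilities agree,
\[
\prb{s}{\sigma}{\tau}(h^+) = \prb{s_x}{\Spproj(\sigma)}{\Spproj(\tau)}(\splift_x(h)^+) \quad\text{for every } h \in \hist(\arn,s)\enspace,
\]
and a standard $\pi$-system extension upgrades this to equality of probability measures on the $\sigma$-algebra generated by cylinders; pushing forward along $\rew$ and $\sprew$ respectively yields the first outcome equality. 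The second equality follows by the same argument applied to the strategies $\Splift_x(\cv{\sigma})$ and $\Splift_x(\cv{\tau})$ in $\arn$, using~\eqref{eq:projliftid4} to identify $\Spproj \circ \Splift_x$ with the identity on the histories that carry positive probability starting at $s_x$.

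Given outcome preservation, optimality transfers by a short diagram chase. Fix $\sigma' \in \strseta$: by~\eqref{eq:liftsplit} we have $\Spproj(\sigma') \in \cv{\strseta}$, and~\eqref{eq:projliftid} yields the strategy identity $\Splift_x(\Spproj(\sigma')) = \sigma'$. Then
\[
\Outcome(\arn, s, \sigma', \Splift_x(\cv{\tau}))
= \Outcome(\sparn, s_x, \Spproj(\sigma'), \cv{\tau})
\pref \Outcome(\sparn, s_x, \cv{\sigma}, \cv{\tau})
= \Outcome(\arn, s, \Splift_x(\cv{\sigma}), \Splift_x(\cv{\tau}))\enspace,
\]
where the outer equalities are instances of the second outcome equality (combined, on the left, with $\Splift_x(\Spproj(\sigma')) = \sigma'$), and the middle relation is optimality of $\cv{\sigma}$ against $\cv{\tau}$ in $\spG$. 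A symmetric computation starting from any $\tau' \in \strsetb$ produces the best-response condition on the $\m$ side, completing the optimality of $(\Splift_x(\cv{\sigma}),\Splift_x(\cv{\tau}))$ in $G$ from $s$.

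For the last assertion, $\Splift_x(\cv{\sigma}) = \cv{\sigma} \circ \splift_x$ is deterministic because $\cv{\sigma}$ is. Stationarity of $\cv{\sigma}$ reduces everything to checking the last state of $\splift_x(h)$: it is $\sepst$ when $h$ ends at $\sepst$, giving action $\cv{\sigma}(\sepst) = x$, and otherwise it is $t_y$, where $y$ is the action taken at the most recent visit to $\sepst$ in $h$ (or $x$ if there is no such visit). The main obstacle is that this $y$ depends a priori on the whole prefix of $h$, so the formula $\Splift_x(\cv{\sigma})(h) = \cv{\sigma}(t_y)$ does not obviously yield a stationary strategy. I would close this by a short induction using the hypothesis $\cv{\sigma}(\sepst) = x$: along any history played under $\Splift_x(\cv{\sigma})$, the memory $y$ remains equal to $x$, because the action selected at every $\M$-controlled visit to $\sepst$ is $x$. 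Hence $\Splift_x(\cv{\sigma})(h) = \cv{\sigma}(t_x)$ for every relevant history $h$ ending at $t \neq \sepst$, and redefining $\Splift_x(\cv{\sigma})$ by this same formula on residual histories yields a deterministic stationary strategy, which by the first part of the proposition gives the same outcomes as the original.
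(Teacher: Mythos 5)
Your proposal is correct and follows essentially the same route as the paper: cylinder-probability matching plus a $\pi$-system argument and pushforward along $\rew$ for outcome preservation, the identical chain $\Outcome(\arn,s,\sigma',\Splift_x(\cv{\tau}))=\Outcome(\sparn,s_x,\Spproj(\sigma'),\cv{\tau})\pref\Outcome(\sparn,s_x,\cv{\sigma},\cv{\tau})=\Outcome(\arn,s,\Splift_x(\cv{\sigma}),\Splift_x(\cv{\tau}))$ for optimality transfer, and the observation that under $\cv{\sigma}(\sepst)=x$ all consistent histories stay in the subarena where only $x$ is played at $\sepst$, so the lift acts statewise. Your explicit remark about redefining the strategy on residual (probability-zero) histories only makes precise what the paper leaves implicit.
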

\begin{proof}
We prove the first part of the proposition.
%The projection $\pi$ from finite histories in $\sparn$ to finite histories in $\arn$ preserves history length, thus it can be extended easily to infinitehistories
Define
\[
\spproj_\infty:\ihist(\sparn,s_x,\cv{\sigma},\cv{\tau})
\to
\ihist(\arn,s,\Splift_x(\cv{\sigma}),\Splift_x(\cv{\tau}))
\]
as
$\spproj^\infty(s_1a_1s_2\cdots)
= 
\spproj(s_1)a_1\spproj(s_2)\cdots $.
Equivalently,
$\spproj^\infty(h)=\lim_n \spproj(h_n)$ where $h_n$ is the prefix of $h$ of length $2n+1$. 
First we show
\begin{equation}
 \label{eq:toto30}
 \prb{\arn,s}{\Splift(\cv{\sigma})}{\Splift(\cv{\tau})}
=
\spproj_\infty \prb{\sparn,s_x}{\cv{\sigma}}{\cv{\tau}}
% \forall h\in\hist(\arn,s,\sigma,\tau),
%\\
% \label{eq:toto31}
%\prb{\arn,s}{\sigma}{\tau}
%=
%\spproj_\infty \prb{\sparn,s_x}{\Spproj(\sigma)}{\Spproj(\tau)}\enspace.
\end{equation}
Let $h\in\hist(\arn,s,\Splift_x(\cv{\sigma}),\Splift_x(\cv{\tau}))$,
remember that $h^+\subseteq \hist^\infty(\arn,s)$
denotes the cylinder of infinite histories starting with $h$.
According to
% property~\eqref{eq:projliftid4} of 
Proposition~\ref{lem:imageco},
%$\splift_x(h)$ is the unique history 
%$h_0\in\hist(\sparn,s_x)$
%such that $\spproj(h_0)=h$.
%As a consequence
\[
\spproj_\infty^{-1}(h^+)
= (\splift_x(h))^+ \enspace.
\]
By definition of 
$\prb{\arn,s}{\Splift(\cv{\sigma})}{\Splift(\cv{\tau})}$
and since $\spproj\circ\splift_x$ is the identity on $\hist(\arn)$,
an easy induction shows that
$\prb{\arn,s}{\Splift(\cv{\sigma})}{\Splift(\cv{\tau})}(h^+) = \prb{\sparn,s_x}{\cv{\sigma}}{\cv{\tau}}((\splift_x(h))^+) $.
Finally
$\prb{\sparn,s_x}{\cv{\sigma}}{\cv{\tau}}(\spproj_\infty^{-1}(h^+))=\prb{\arn,s}{\Splift(\cv{\sigma})}{\Splift(\cv{\tau})}(h^+)$.
Since this holds for every cylinder $h^+$ with 
$h\in\hist(\arn,s,\Splift_x(\cv{\sigma}),\Splift_x(\cv{\tau}))$
 we get~\eqref{eq:toto30}. 
 
The projection $\spproj_\infty$ does not change the sequence of rewards thus $\rew\circ \spproj_\infty = \rew$
and
  \[
\Outcome(\arn,s,\Splift_x(\cv{\sigma}),\Splift_x(\cv{\tau}))
  =
\rew \prb{\arn,s}{\Splift_x(\cv{\sigma})}{\Splift_x(\cv{\tau})}
=
\rew \spproj_\infty
 \prb{\sparn,s_x}{\cv{\sigma}}{\cv{\tau}}
=
\rew \prb{\sparn,s_x}{\cv{\sigma}}{\cv{\tau}}
=
\Outcome(\sparn,s_x,\cv{\sigma},\cv{\tau})
\]
which proves 
$\Outcome(\sparn,s_x,\cv{\sigma},\cv{\tau})=
\Outcome(\arn,s,\Splift_x(\cv{\sigma}),\Splift_x(\cv{\tau}))$.
We can apply this equality to $ \cv{\sigma}=\Spproj(\sigma)$ and $\cv{\tau}=\Spproj(\tau)$ and since $\Splift_x\circ \Spproj$ is the identity 
(Proposition~\ref{lem:imageco})
this implies
 $
\Outcome(\sparn,s_x,\Spproj(\sigma),\Spproj(\tau))=
\Outcome(\arn,s,\sigma,\tau)
$ which terminates the proof of the first part of the proposition.

We prove the second part of the proposition.
Assume $\cv{\sigma}$ and $\cv{\tau}$ are optimal in $\sparn$ from state $s_x$.
Let $\sigma\in\strseta$ and $\tau\in\strsetb$ be any strategies in $G$. 
We prove that
\begin{equation}\label{eq:toto}
\Outcome(\arn,s,\sigma,\Splift_x(\cv{\tau}))
\pref
\Outcome(\arn,s,\Splift_x(\cv{\sigma}),\Splift_x(\cv{\tau}))
\pref
\Outcome(\arn,s,\Splift_x(\cv{\sigma}),\tau)\enspace.
\end{equation}
By symmetry, it is enough to prove the left inequality:
\begin{align*}
\Outcome(\arn,s,\sigma,\Splift_x(\cv{\tau}))
&=
\Outcome(\arn,s,\Splift_x(\Spproj(\sigma)),\Splift_x(\cv{\tau}))\\
&=
\Outcome(\sparn,s_x,\Spproj(\sigma),\cv{\tau}))\\
&
\pref
\Outcome(\sparn,s_x,\cv{\sigma},\cv{\tau}))\\
&=
\Outcome(\arn,s,\Splift_x(\cv{\sigma}),\Splift_x(\cv{\tau}))\enspace.
\end{align*}
The first equality holds because $\spproj\circ\splift_x$ is the identity on $\hist(\arn)$
thus $\Splift_x\circ\Spproj$ is the identity on $\Sigma$.
The second equality was established in the first part of the proposition.
The inequality holds because $(\cv{\sigma},\cv{\tau})$ are optimal in $\sparn$
and $\Spproj(\sigma)$ is a strategy in $\spG$ since $\Spproj(\strseta ) \subseteq \cv{\strseta}$ by hypothesis.
The last equality was established in the first part.

Since by hypothesis $\Splift_x(\cv{\strseta} ) \subseteq \strseta$ and 
$\Splift_x( \cv{\strsetb}) \subseteq \strsetb$ then $\Splift_x(\cv{\sigma})$ and $\Splift_x(\cv{\tau}))$ are strategies in $\spG$
and according to~\eqref{eq:toto} they are optimal in $\spG$.

Assume now that $\cv{\sigma}$ is deterministic and stationary and $x=\cv{\sigma}(\sepst)$.
On $\hist(\arn_x)$ the lifting $\splift_x$ is very simple,
it sends every finite history $s_1a_1s_2\cdots s_n\in \hist(\arn_x)$
to
$\splift_x(s_1)a_1\splift_x(s_2)\cdots \splift_x(s_n)$.
As a consequence, a simple induction shows that
every history consistent with $\Splift(\cv{\sigma})$ is a history of $\arn_x$
and
$
\Splift(\cv{\sigma})(s_1a_1s_2\cdots s_n) = \cv{\sigma}(\splift(s_n))$.
Thus $\Splift(\cv{\sigma})$ is deterministic stationary.
%Moreover, since $\cv{\covsopt}$ is  and $e=\cv{\covsopt}(\sepst)$
%then $\covsopt=\Splift_e(\cv{\covsopt})$ is also deterministic and stationary,
%because the image of a finite history $s_1a_1$\hist(\arn_e)$ $\splift$ is the identity .
\end{proof}

Remark that the last property stated in Proposition~\ref{prop:liftproj} is specific to the case where $x=\cv{\sigma}(\sepst)$,
because in general projection turns deterministic strategies into deterministic strategies
but does not preserve stationarity. For example in the game of Fig.~\ref{fig:splitexample},
the stationary strategy $\sigma$ in $\sparn$ which plays $b$ in $s_a$ and $a$ in $s_b$
is not stationary anymore once projected to $\arn$: when playing with $\Splift_\sepst(\sigma)$, it is necessary to remember which action
was taken on the last visit to $\sepst$.

The proof of Theorem~\ref{theo:main} is performed by induction on the size of the games in $\mathcal{G}$.
The size
of an arena $\arn=\arena$
is defined as 
\[
\size(\arn)=\sum_{s\in\states} (\abs{\act{s}} - 1)
%\abs{\states}.
.
\] 
Note that $\size(\arn)\geq 0$ since each state has at least one
available 
action.
By extension, the size of a game is the size of its arena.

The core of the inductive step in the proof of Theorem~\ref{theo:main} is:
\newcommand{\stratsM}{\strseta}
\newcommand{\stratsm}{\strsetb}
\newcommand{\spstratsM}{\cv{\stratsM}}
\newcommand{\spstratsm}{\cv{\stratsm}}

\begin{lemma}\label{lem:inductivestep}
Let $G=(\arn,\strseta,\strsetb,\pref)$ and $\spG=(\sparn,\spstratsM,\spstratsm,\pref)$ be two games
such that $\sparn$ is the split of $\arn$ on some state $\sepst$ of $\arn$ controlled by $\M$.
Assume that:
\begin{enumerate}
\item[i)] at least two actions are available in state $\sepst$ in arena $\arn$,
\item[ii)]
there exists \osd\ strategies in every subgame $G'$ of $\spG$
such that such that $\size(G') < \size(G)$,
\item[iii)]
there exists \osd\ strategies in every subgame $G'$ of $\spG$
which is a one-player game,
\item[iv)]
every deterministic stationary strategy in the arena $\sparn$ is a strategy in the game $\spG$,
i.e. $\spstratsM \cup \spstratsm$ contains all these strategies.
% conditions~\eqref{eq:projsplit} and~\eqref{eq:liftsplit} of Definition~\ref{defi:closedgames} are satisfied.
\end{enumerate}
Then there exists a pair $(\covsopt,\covtopt)$ of \osd\ strategies in $\spG$.
\end{lemma}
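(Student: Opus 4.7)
The plan exploits the separation property of the split $\sparn$ (Proposition~\ref{prop:sep}): once $\M$ commits to an action $x$ at $\sepst$, the play from $\sepst$ remains in $\{\sepst\}\cup\spstates_x$ forever. I will apply hypothesis~(ii) to extract \osd\ strategies for each ``branch $x$'', combine the $\m$-components into a single deterministic stationary $\tau^*$, and then apply hypothesis~(iii) to the one-player game $\spG/\tau^*$ to pick an optimal action $x^*$ at $\sepst$ and define the candidate $\sigma^*$.

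For each $x\in\act{\sepst}$, let $\spG_x$ be the subgame of $\spG$ associated to the subarena that restricts $\sepst$ to action $x$ and each state $s_y\in\spstates_y$ with $y\neq x$ to a single (arbitrarily chosen) action. Using $|\act{\sepst}|\geq 2$, a direct calculation gives
\[
\size(\spG_x) = \sum_{s\in\states\bh\{\sepst\}}(\abs{\act{s}}-1) < (\abs{\act{\sepst}}-1) + \sum_{s\in\states\bh\{\sepst\}}(\abs{\act{s}}-1) = \size(G)\enspace,
\]
so by (ii) there is an \osd\ pair $(\covsopt_x,\covtopt_x)$ in $\spG_x$. The separation property ensures that the restrictions of $(\covsopt_x,\covtopt_x)$ to $\{\sepst\}\cup\spstates_x$ are optimal in the ``branch $x$'' subgame and are independent of the arbitrary choices made in the other branches. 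Now define $\tau^*$ on $\sparn$ by $\tau^*(s_y):=\covtopt_y(s_y)$ for each $y\in\act{\sepst}$ and each $s_y\in\spstates_y$ controlled by $\m$; by (iv), $\tau^*\in\spstratsm$. The subgame $\spG/\tau^*$ of $\spG$ obtained by restricting $\m$'s actions to those dictated by $\tau^*$ is a one-player subgame of $\spG$, so by (iii) it admits an \osd\ strategy $\covsopt$. Set $x^*:=\covsopt(\sepst)$ and $\sigma^*:=\covsopt\in\spstratsM$ (by (iv)).

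To verify that $(\sigma^*,\tau^*)$ is optimal in $\spG$: the fact that $\sigma^*$ is a best response to $\tau^*$ is immediate, since $\sigma^*$ is optimal against $\tau^*$ in $\spG/\tau^*$ and the $\M$-strategy sets of $\spG/\tau^*$ and $\spG$ coincide. Showing that $\tau^*$ is a best response to $\sigma^*$ is argued by cases on the starting state $s$: for $s\in\{\sepst\}\cup\spstates_{x^*}$ the play under $\sigma^*$ stays in this region and the problem reduces to the branch-$x^*$ subgame, where one combines the optimality of $(\covsopt_{x^*},\covtopt_{x^*})$ with hypothesis (iii) applied to the one-player $\m$-game obtained by fixing $\M$ to $\sigma^*$ on this region, together with the interchange-of-optimal-pairs remark following the optimality definition, to conclude that $\covtopt_{x^*}$ is a best response to $\sigma^*|_{\{\sepst\}\cup\spstates_{x^*}}$; for $s\in\spstates_y$ with $y\neq x^*$ a parallel analysis is carried out in branch $y$. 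The main obstacle is this last case: under $(\sigma^*,\tau^*)$ the continuation at $\sepst$ is a switch to branch $x^*$, whereas $(\covsopt_y,\covtopt_y)$ was defined via $\spG_y$ with a ``loop back to branch $y$'' continuation. Handling this discrepancy is expected to require either a more careful selection of the branch-$y$ strategies that accounts for the actual continuation or a further application of hypotheses (ii) and (iii) to intermediate subgames whose continuation at $\sepst$ matches the one induced by $\sigma^*$.
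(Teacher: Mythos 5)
Your overall architecture matches the paper's: decompose $\sparn$ into branch games $\spG_x$ and invoke hypothesis~(ii) to get \osd\ pairs $(\covsopt_x,\covtopt_x)$, glue the $\m$-parts into one deterministic stationary $\tau^*$, then use the one-player game obtained by freezing $\m$ to $\tau^*$ (hypothesis~(iii)) to decide what to do at $\sepst$. The size computation and the use of the separation property are fine. But there is a genuine gap at the decisive step: you take the \osd\ strategy of the one-player game to be the final strategy $\sigma^*$ itself. That strategy is only guaranteed to be a best response to $\tau^*$; a best response to an optimal strategy need not itself be optimal, and in general $\tau^*$ is \emph{not} a best response to it. Concretely, in the one-player game the parts of $\spstates_{x^*}$ that $\tau^*$ steers away from are irrelevant to its optimality criterion, so $\sigma^*$ may behave arbitrarily badly in a region reachable only under a deviating $\tau$; such a $\tau$ then achieves a strictly smaller outcome against $\sigma^*$ than $\tau^*$ does, destroying optimality of the pair. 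Your sketch of this direction (``combine the optimality of $(\covsopt_{x^*},\covtopt_{x^*})$ \dots with the interchange-of-optimal-pairs remark'') does not close the gap: the interchange remark applies to two \emph{optimal pairs}, and you never establish that the restriction of $\sigma^*$ to branch $x^*$ belongs to one. You correctly sense that something is unresolved, but you locate the obstacle in the branches $y\neq x^*$, whereas the real problem is already present in branch $x^*$.

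The repair is exactly the move the paper makes: use the one-player optimum $\zeta^\opt$ \emph{only} to read off the action $e=\zeta^\opt(\sepst)$, and define the final $\M$-strategy $\covsopt$ as the stationary extension of the branch-optimal $\covsopt_e$, which---unlike $\zeta^\opt$---is one half of an optimal pair of the two-player branch game $\spG_e$, so that $\covtopt_e$ is genuinely a best response to it. The other direction, optimality of $\covsopt$ against $\covtopt$ from a state $s_e$, follows from the chain $\Outcome(\sparn,s_e,\sigma,\covtopt)\pref\Outcome(\sparn,s_e,\zeta^\opt,\covtopt)=\Outcome(\sparn_e,s_e,\zeta^\opt_e,\covtopt_e)\pref\Outcome(\sparn_e,s_e,\covsopt_e,\covtopt_e)=\Outcome(\sparn,s_e,\covsopt,\covtopt)$, using in order: one-player optimality of $\zeta^\opt$, the separation property (since $\zeta^\opt(\sepst)=e$, plays from $s_e$ under $\zeta^\opt$ never leave $\spstates_e\cup\{\sepst\}$), and optimality of $(\covsopt_e,\covtopt_e)$ in $\spG_e$. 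This also dissolves the ``other branches'' obstacle you flag at the end: optimality only needs to be established from the states $s_e$ (and $\sepst$), which is all that Proposition~\ref{prop:liftproj} requires in order to project the pair back to an optimal pair in $\arn$ from every state $s$; no analysis of plays started in $\spstates_y$ with $y\neq e$ is needed.
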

\begin{proof}
We set the usual notations:
%\begin{align*}
\[
%&
\arn=\arena \text{ and }
%&
\sparn=\sparena\enspace.
\]

For each action $x\in\act{\sepst}$ let
$\cv{\arn}_x$ be the subarena of $\covarena$ induced by the set 
$\cv{\trans}_x$ of transitions:
\begin{equation}
\cv{\trans}_x=\{(\sepst,x,s_x)\mid (\sepst,x,s)\in \trans\}
\cup
\{ (s_x,a,s'_x)\mid \text{$(s,a,s')\in\trans$ and $s\neq\sepst$} \}.
\label{eq:tx}
\end{equation}
For every $x\in\act{\sepst}$ let $\cv{G}_x= (\cv{\arn}_x,\strseta_x,\strsetb_x,\pref_x)$
be the subgame of $\cv{G}$ induced by the subarena $\cv{\arn}_x$ of $\cv{\arn}$.

We will make use several times of the following lemma,
which relates the outcomes in the games $\spG$ and $\spG_x$
when only the action $x$ is played in $\sepst$.
\begin{lemma}\label{lem:spGx}
Let $\sigma$ and $\tau$ be two strategies in $\spG$ and $x\in\act{\sepst}$.
Assume that $\sigma$ is deterministic stationary and
$
\sigma(\sepst)=x
$.
Let $\sigma_x$ and $\tau_x$ be the restrictions
of $\sigma$ and $\tau$ to $\hist(\sparn_x)$.
Then $\sigma_x$ and $\tau_x$ are strategies in $\spG_x$.
Moreover, for every state $s\in\states$,
\[
\Outcome(\sparn,s_x,\sigma,\tau)
=
\Outcome(\sparn_x,s_x,\sigma_x,\tau_x)\enspace.
\]  
\end{lemma}
\begin{proof}
First, remark that $\sigma_x$ and $\tau_x$ are strategies in $\sparn_x$.
For $\sigma_x$ it is obvious: $\sigma_x$ is the deterministic stationary strategy
in $\spG_x$ which associates $\sigma(s_x)$ to $s_x\in S_x$
and in particular $\sigma_x(\sepst) = \sigma(\sepst) = x$.
For $\tau_x$ it is quite obvious as well, because every state $s_x\in S_x$ controlled by 
player $\m$ is different from $\sepst$ thus the same actions are available in $s_x$
in both arenas $\sparn$ and $\sparn_x$.

Therefore, $\sigma_x$ and $\tau_x$ are strategies in the subgame $\spG_x$
by definition of a subgame (Definition~\ref{defi:subgame}).

According to the separation property (Proposition~\ref{prop:sep}),
since $\sigma(\sepst) =x$
every history consistent with $\sigma$ and starting from $S_x$ stays in $\sparn_x$,
thus $\hist(\sparn,s_x,\sigma,\tau)=\hist(\sparn_x,s_x,\sigma_x,\tau_x)$.
Since $\sigma$ and $\sigma_x$ on one hand and $\tau$ and $\tau_x$ on the other hand coincide on
$\hist(\sparn_x,s_x,\sigma_x,\tau_x)$ then
$\prb{\sparn,s_x}{\sigma}{\tau}
=
\prb{\sparn_x,s_x}{\sigma_x}{\tau_x}$.
By definition of outcomes, this terminates the proof of Lemma~\ref{lem:spGx}.
\end{proof}

Note that, modulo the renaming of states,
$\cv{\arn}_x$ is isomorphic to the subarena $\arn_x$ of $\arn$ 
obtained by removing from $\act{\sepst}$ all actions except $x$.
According to hypothesis i) $|\act{\sepst}|\geq 2$
thus
$\size(\cv{\arn}_x) < \size(\arn)$.
According to hypothesis ii), in the game $\cv{G}_x$ there are \osd\ strategies
\[
%(\covsopt_x,\covtopt_x)\in\strseta(\cv{\arn}_x)\times\strsetb(\cv{\arn}_x)
(\covsopt_x,\covtopt_x)\in\strseta_x\times\strsetb_x\enspace.
\] 
Given these strategies,
our aim is to construct a pair $(\covsopt,\covtopt)$
of \osd\ strategies in the game $\spG$.

\paragraph{Construction of $\covtopt$}
Let 
$\covtopt\in\strsetb(\covarena)$ 
be the deterministic stationary strategy of player \m\ in $\sparn$
 defined  in the following way:
for each $s\in\states$ such that $\player(s)=\m$ and all
  $x\in\act{\sepst}$,
\begin{equation}
\covtopt(s_x)=\covtopt_x(s_x)\enspace.
\label{eq:defcovtopt}
\end{equation}

\paragraph{Construction of $\covsopt$}
The construction of $\covsopt$ simply consists in selecting one of the \osd\
strategies $(\sigma_x)_{x\in\act{\sepst}}$
in $\spG_x$ and extend it in a natural way to $\spG$.

The point is to choose the right action $x\in\act{\sepst}$.
For that we consider the subarena $\covarena[\covtopt]$ of $\covarena$ 
obtained from \covarena\ by removing all transitions $(s_x,a,s'_x)$
controlled by \m\ and such that $a\neq \covtopt(s_x)$.
In other words, we restrain the moves of player \m\ 
in $\covarena[\covtopt]$ by forcing him
to play actions according to  strategy \covtopt.
Therefore $\covarena[\covtopt]$ is a subarena of $\covarena$.
We denote $\cv{G}[\covtopt]$ the subgame of $\cv{G}$
induced by $\covarena[\covtopt]$.

Note that  only player \M\ has  freedom to choose
actions in $\covarena[\covtopt]$, 
thus $\cv{G}[\covtopt]$ is a one-player game
controlled by player \M.
Therefore, according to hypothesis iii)
there are \osd\ strategies
$
(\zeta^\opt,\tau_0)
$
in the game $\cv{G}[\covtopt]$.
Actually $\tau_0$ is the unique trivial strategy of player $\m$
in $\covarena[\covtopt]$.

Since there is no restriction on actions of player $\M$ in
$\covarena[\covtopt]$ then $\zeta^\opt$ is also a strategy
in arena $\covarena$ and by definition of a subgame,
$\zeta^\opt$ is a strategy in the game $\spG$:
\begin{equation}\label{eq:zetaok}
\zeta^\opt \in \cv{\strseta}\enspace.
\end{equation}

%Moreover, the outcome of 
% or equivalently $\zeta^\opt$ is a best
%response to  strategy \covtopt.

Let $e\in\act{\sepst}=\actf{\cv{\trans}}{\sepst}$ be
the action chosen by $\zeta^\opt$ in the state $\sepst$,
\begin{equation}
\zeta^\opt(\sepst)=e\enspace.
\label{eq:actchoice}
\end{equation}

The stationary and deterministic strategy $\covsopt:\spstates\to \actions$ is the extension
of the \osd\ strategy $\covsopt_e:\spstates_e\to A$ from $\spstates_e$ to $\spstates$ defined by:
\[
\forall s_x\in\spstates, \covsopt(s_x) = \covsopt_e(s_e)\enspace.
\]
In particular, since $e$ is the only action available in $\sepst$
in the arena $\cv{\arn}_e$,
\begin{equation}\label{eq:covsopte}
\covsopt(\sepst) = e\enspace.
\end{equation}

\paragraph{Strategies $(\covsopt,\covtopt)$ are optimal in $\spG$.}
According to hypothesis iv), the strategies strategies $\covsopt$ and $\covtopt$ are strategies in $\spG$.
We  show they are optimal in the game $\cv{G}$ if the initial state
belongs to $\states_e$,
%states $s_e\in\states_e$, %$s\in\states$, i.e.
i.e. for all strategies 
$\sigma\in\cv{\strseta},\tau\in\cv{\strsetb}$ 
and all $s\in\states$,
\begin{equation}
\Outcome(\covarena,s_e, \sigma, \covtopt)
\pref
\Outcome(\covarena,s_e, \covsopt, \covtopt)
\pref
\Outcome(\covarena,s_e, \covsopt, \tau)\enspace.
\label{eq:opti}
\end{equation}

\medskip 

We start with the proof of the right-handside of~\eqref{eq:opti}.
Let $\tau$ be any strategy for $\m$ in $\spG$.
Then its restriction $\tau_e$ to $\hist(\covarena_e)$
is also strategy in $\sparn_e$ because
for every state $t_e \in S_e$ controlled by player $\m$,
$t_e\neq \sepst$ thus 
exactly the same actions are available in $t_e$ in both arenas $\covarena$ and $\covarena_e$.
The right inequality in~\eqref{eq:opti} decomposes as:
%\begin{align}
\[
\Outcome(\covarena,s_e, \covsopt, \covtopt)
=
\Outcome(\covarena_e,s_e, \covsopt_e, \covtopt_e)
%\\
\pref_e
\Outcome(\covarena_e,s_e, \covsopt_e, \tau_e)
%\\
=
\Outcome(\covarena,s_e, \covsopt, \tau)
\enspace.
\]
The left and right equalities holds according to Lemma~\ref{lem:spGx}.
The central inequality holds because $\covsopt_e$ and $\covtopt_e$
are optimal in $\spG_e$ and
$\tau_e$ is a strategy in $\spG_e$, according to Lemma~\ref{lem:spGx} again.
Since $\pref_e$ is a restriction of $\pref$ we get the right-handside of~\eqref{eq:opti}. 

%
%\end{align}

\medskip

Now we prove the left-handside
of~\eqref{eq:opti}.
Let $\sigma$ be any strategy for $\M$ in $\spG$.
Then the restriction $\sigma'$ of $\sigma$ to $\hist(\covarena[\covtopt])$
is a strategy in $\sparn[\covtopt]$ because the same actions are available
in every state $s$ controlled 
by $\M$ in both arenas $\covarena$ and $\covarena[\covtopt]$.
Let $\zeta^\opt_e$ be the restriction of
$\zeta^\opt$ to the states of $\states_e$.
%
%Moreover the set of histories consistent with
% $\sigma$ and $\covtopt$ in $\covarena$
% coincide with the set of histories 
% consistent with
% $\sigma'$ in $\covarena[\covtopt]$,
% where $\sigma'$ is the restriction of $\si
% 
% and the corresponding outcomes 
%
%Therefore, 
The left inequality in~\eqref{eq:opti} decomposes as:
\begin{align}
\label{eq:zeta1}
\Outcome(\covarena,s_e,\sigma,\covtopt)
&=
 \Outcome(\covarena[\covtopt],s_e,\sigma',\tau_0)\\
\label{eq:zeta2}
& \pref 
 \Outcome(\covarena[\covtopt],s_e,\zeta^\opt,\tau_0)\\
\label{eq:zeta3}
& =
 \Outcome(\covarena,s_e,\zeta^\opt,\covtopt)\\
\label{eq:zeta4}
& =
 \Outcome(\covarena_e,s_e,\zeta_e^\opt,\covtopt_e)\\
\label{eq:zeta5}
& \pref_e
 \Outcome(\covarena_e,s_e,\covsopt_e,\covtopt_e)\\
\label{eq:zeta6}
& =
 \Outcome(\covarena_e,s_e,\covsopt,\covtopt)
 \enspace.
\end{align}
The equality~\eqref{eq:zeta1} holds
because $\hist(\covarena,s_e,\sigma,\covtopt)=\hist(\covarena[\covtopt],s_e,\sigma',\tau_0)$
and for every finite history $h\in\hist(\covarena,s_e,\sigma,\covtopt)$, 
$\prb{\covarena,s_e}{\sigma}{\covtopt}(h^+)=\prb{\covarena[\covtopt],s_e}{\sigma'}{\tau_0}(h^+)$.
The inequality~\eqref{eq:zeta2} holds because $\zeta^\opt$ and $\tau_0$ are optimal in $\cv{G}[\covtopt]$
and by definition of subgames $\sigma'$ is a strategy in the subgame $\cv{G}[\covtopt]$ and the preference order in $\spG[\covtopt]$ is a restriction of $\pref$.
The equality~\eqref{eq:zeta3} holds for similar reasons than~\eqref{eq:zeta1} does.
The equalities~\eqref{eq:zeta4} and~\eqref{eq:zeta6}
are consequences of Lemma~\ref{lem:spGx}.
The inequality~\eqref{eq:zeta5} holds because $(\covsopt_e,\covtopt_e)$ is a pair of optimal strategies in $\spG_e$
and $\zeta_e^\opt$ is a strategy in $\spG_e$ according to Lemma~\ref{lem:spGx} again.
Since $\pref_e$ is a restriction of $\pref$ we get the left inequality in~\eqref{eq:opti}.

Finally the two inequalities in~\eqref{eq:opti} do hold and since $\covsopt$ is deterministic stationary the proof of Lemma~\ref{lem:inductivestep} is over.
\end{proof}

\begin{proof}[Proof of Theorem~\ref{theo:main}]
We fix a subgame-closed and split-closed collection $\mathcal{G}$ of games which contains all deterministic stationary strategies.
We denote $\mathcal{G}=(G_i)_{i\in I}$
and for every $i\in I$ we denote
\[
G_i = (\arn_i,\strseta_i,\strsetb_i,\pref_i)_{i\in I}\enspace.
\]

% We will say that the game  $(\arn,\pref)$ satisfies the subgame covering
% condition
%  if
% it
% satisfies the 
% hypotheses of Theorem~\ref{theo:main}, i.e. for each covering 
% $\embed : \covarena\to\arn$ and each one-player subarena
% $\covarena_*$ of \covarena\ the player controlling $\covarena_*$ has
% an optimal deterministic stationary strategy in the game
% $(\covarena_*,\pref)$. 

% i.e.
%all one-player games $(\arn',\pref)$ with 
%$\arn'\in\mathcal{G}_{\textrm{one}}\cup\mathcal{G}_{\textrm{split}}$
%have \osd\ strategies.

The proof of Theorem~\ref{theo:main}
is carried out by induction on the size of $\arn_i$.

The case where $\size(\arn_i)=0$ for some $i\in I$ is trivial.
In this case there is a unique action available in each state $s$ of $\arn_i$,
and each player has  a unique strategy which is \osd.

Let $G=(\arn,\strseta,\strsetb,\pref)$ a game in $\mathcal{G}$
such that $\size(G)>0$
and suppose  
that the theorem holds for all games $G_i$ in $\mathcal{G}$
whose arenas $\arn_i$ satisfy $\size(\arn_i) <  \size(\arn)$.
%satisfying the subgame covering condition.

In case $G$ is a one-player game, then by hypothesis \osd\ strategies exist
for both players. In the sequel we assume $G$ is \emph{not}
a one-player game.

Therefore there exists 
a state $\sepst$ in $\arn$ such that $\act{\sepst}\geq 2$.
Assume first that 
\begin{equation}\label{eq:control}
\player(\sepst)=\M\enspace.
\end{equation}
%the symmetric case when the separation state \sepst\ 
%is controlled by player \m\ will be
%discussed briefly later.

Let $\sparn$ be the split of $\arn$ on $\sepst$.
Then, since $\mathcal{G}$ is split-closed there exists a game
$\cv{G}=(\sparn,\cv{\strseta},\cv{\strsetb},\pref)$ in $\mathcal{G}$
such that property~\eqref{eq:liftsplit} of Definition~\ref{defi:closedgames} holds for every action $x$ available in $\sepst$.

All conditions of Lemma~\ref{lem:inductivestep} are satisfied for the games
$G$ and $\spG$: condition i) is by choice of $\sepst$,
condition ii) is the inductive hypothesis,
and by hypothesis conditions iii) and iv) hold for every game in $\mathcal{G}$, in particular for $\spG$.
As a consequence there exists a pair
$(\covsopt,\covtopt)$ of \osd\ strategies in $\spG$.

Let $e=\covsopt(\sepst)$ be the action played by $\covsopt$ in the separation state
 and $\Splift_e(\covsopt)$ the lifting of $\covsopt$ from $\sparn$ to $\arn$ with respect to action $e$.
By choice of $\spG$, condition~\eqref{eq:liftsplit} holds thus
$\Splift_x(\cv{\strseta} ) \subseteq \strseta$ 
and $\Splift_e(\covsopt)$ is a strategy in $G$.
According to the second part of Proposition~\ref{prop:liftproj},
the strategy $\Splift_e(\covsopt)$ is both
optimal in $G$
and deterministic stationary.

Therefore $\M$ has an \osd\ strategy in $G$.

To find an \osd\ for player \m\ in $G$
it suffices to choose as a separation state a state controlled
by player \m\ with at least two actions available.
Such a state exists because $G$ is not a one-player game.
By a reasoning symmetric to the one developped previously we
can construct another pair of optimal strategies 
$(\sigma^\star,\tau^\star)$ in $G$,
however now the strategy $\tau^\star$ of player \m\ will be
deterministic stationary.

But in zero-sum games if we have two pairs of optimal strategies
$(\sigma^\opt,\tau^\opt)$ and $(\sigma^\star,\tau^\star)$ then 
$(\sigma^\opt, \tau^\star)$ is also a pair of optimal strategies.

This ends the proof of Theorem~\ref{theo:main}.

\medskip

Even if it has no bearing on the proof,
we provide an explicit description
of the strategy $\Splift_e(\covtopt)$.
%built in the proof of Theorem~\ref{theo:main}.
Let $h\in\hist(\arn)$ 
be any finite history consistent with \arn\ starting in a state $s$ and
ending in a state $s$ controlled by \m\
and $\cv{h} =\phi_e(h)$ the lifting of $h$ in 
$\hist(\covarena)$ with respect to $e$.
%\in\hist(\covarena)$ be the history 
%starting in the state $s_e$, consistent with \covarena\ and such that
%$\embed(\cv{h})=h$ (i.e. $\cv{h}=\pembed(h)$).
Then $\cv{h}$ ends in a state $s_x\in\iembed(s)$ for some
$x\in\act{\sepst}$. By definition, 
$\plcd{\ipembed(\covtopt)}{h}= \plcd{\covtopt}{\cv{h}}=\plcd{\covtopt}{s_x}=
\plcd{\covtopt_x}{s_x}$. Informally, for a history ending in a state $s$
player \m\ chooses one of the actions $\plcd{\covtopt_x}{s_x}$,
$x\in\act{\sepst}$.

The problem is to see which of
these actions should be chosen and this depends on the last
state of $\cv{h}$. Thus
the question is if we can obtain the last state of $\cv{h}$
without explicitly calculating the whole lifted history $\cv{h}$?
If $h$ never visits the state $\sepst$
then, since $\cv{h}$ begins in a state of $\covstates_e$ this history can only end
in the state $s_e\in\covstates_e$, i.e 
$\plcd{\ipembed(\covtopt)}{h}=\topt_e(s_e)$.
If $h$ visits $\sepst$ then all depends on the action chosen by \M\
during the last visit to \sepst, if this action is $x\in\act{\sepst}$
then the last state of $\cv{h}$ is $s_x$ and therefore
$\plcd{\ipembed(\covtopt)}{h}=\covtopt_x(s_x)$. Notice that the strategy
$\ipembed(\covtopt)$ is not stationary, however we only need a finite
amount of memory to implement it, 
 it is sufficient to 
know all strategies $\covtopt_x$, the last state $s$ of $h$,
whether since the begining of the game  the state \sepst\ was visited
or not
and if it was visited then 
what action $x\in\act{\sepst}$ was taken during the last visit to \sepst.
\end{proof}

\section{Final remarks}

Theorem~\ref{theo:main} gives a sufficient condition for the existence
of \osd\  strategies for a given two-player
game.
% 
%  However, as   in Section~\ref{sec:examples}, we use
% this theorem  to prove the existence of such strategies for a
% class of games rather than for a particular game.
% Such a class $\mathcal{C}$ of games
% should satisfy two conditions:
% \begin{enumerate}[(a)]
% \item 
% it should be covering closed,
% \item
% all one-player games in $\mathcal{C}$ should have
% \osd\
%  strategies.
% \end{enumerate}
% The first condition is usually trivially satisfied and the only
% difficulty is to establish  (b).
% But  usually it is much simpler to deal with one-player games than
% to examine two-player games directly. % and thus the main merit of

%Theorem~\ref{theo:main}
%it to reduce the problem from two-player games to one-player games. 
The examples  in Section~\ref{sec:payoff} were given only to
illustrate the method, we do not think that reestablishing
known results is of
particular interest.
%We  see this rather as a convenient tool that can be used when new games are
%invented
%and the question about the existence of deterministic stationary
%optimal strategies is raised for them.

Let us note that in recent years, 
attempting to capture   subtle 
aspects of computer systems behaviour, 
several new games were proposed,
as typical example %of this effort
we can cite~\cite{CHJ05}
combining parity and mean-payoff games.
We hope that Theorem~\ref{theo:main} and its deterministic counterpart
Theorem~\ref{theo:deterministic} 
will prove useful in the study of such new games.
%\fi

Finally, note that an improved version of the results is in preparation, for a class of games both more general
and simpler, where probabilities are abstracted as non-determinism.

\bibliographystyle{alpha}
\bibliography{2j1j}

\end{document}